\newtheorem{theorem}{Theorem}
\newtheorem{remark}{Remark}
\setlist{nosep}
\newcommand{\vect}[1]{\ensuremath{\mathbf{#1}}}
\newcommand{\x}{\vect{x}}
\newcommand{\y}{\vect{y}}
\begin{document}

\date{}

\title{United We Defend: Collaborative Membership Inference Defenses \\ in Federated Learning}

\author{%
	\rm 
	Li Bai\textsuperscript{1}, Junxu Liu\textsuperscript{1,2}, Sen Zhang\textsuperscript{1}, Xinwei Zhang\textsuperscript{1}, Qingqing Ye\textsuperscript{1}, 
	Haibo Hu\textsuperscript{1,2} \thanks{Corresponding author: \texttt{haibo.hu@polyu.edu.hk}} \\
	\rm Department of Electrical and Electronic Engineering, The Hong Kong Polytechnic University\textsuperscript{1} \\
 	\rm PolyU Research Centre for Privacy and Security Technologies in Future Smart Systems\textsuperscript{2} \\
}

\maketitle
\begin{abstract}
Membership inference attacks (MIAs), which determine whether a specific data point was included in the training set of a target model, have posed severe threats in federated learning (FL).
Unfortunately, existing MIA defenses, typically applied independently to each client in FL, are ineffective against powerful trajectory-based MIAs that exploit temporal information throughout the training process to infer membership status.
In this paper, we investigate a new FL defense scenario driven by heterogeneous privacy needs and privacy-utility trade-offs, where only a subset of clients are defended, as well as a collaborative defense mode where clients cooperate to mitigate membership privacy leakage.
To this end, we introduce CoFedMID, a collaborative defense framework against MIAs in FL, which limits local model memorization of training samples and, through a defender coalition, enhances privacy protection and model utility.
Specifically, CoFedMID consists of three modules: a class-guided partition module for selective local training samples, a utility-aware compensation module to recycle contributive samples and prevent their overconfidence, and an aggregation-neutral perturbation module that injects noise for cancellation at the coalition level into client updates.
Extensive experiments on three datasets show that our defense framework significantly reduces the performance of seven MIAs while incurring only a small utility loss. These results are consistently verified across various defense settings.

\end{abstract}

\section{Introduction}
\label{sec:intro}
Federated learning (FL)~\cite{MMR+17} has emerged as a popular distributed machine learning paradigm in which multiple clients (e.g., mobile devices or institutions) collaboratively train a global model by exchanging local model updates (e.g., weights or gradients), rather than sharing their raw data.
Despite direct data leakage being mitigated, indirect privacy risks from potential adversaries (e.g., the honest-but-curious server or clients) remain a concern. One prominent threat is membership inference attacks (MIAs), which aim to determine whether a specific data sample was part of a client's local training dataset~\cite{NMS+19, MLS+19}. Compared to centralized settings, adversaries in FL  often have access to detailed model information and historical model snapshots throughout the training process. This enables powerful trajectory-based MIAs against both local and global models~\cite{GYB+22, CHE+24, ZGL+25}, thereby making effective defenses considerably more challenging~\cite{BLH+24, ZGL+25}.

Beyond the limitations imposed by those powerful attacks, current defense mechanisms, whether specifically designed for FL or adapted from centralized frameworks, typically enforce uniform and independent privacy safeguards across all clients. In practice, such one-size-fits-all strategies fail to reflect the diverse privacy needs of participants. Clients with low-risk or publicly available data may not require the same level of protection as those handling sensitive information~\cite{JZY+15, EHS+15, LJL+24}, especially when stronger defenses come at the cost of reduced utility~\cite{MNJ+22}. Furthermore, existing research indicates that implementing defenses in isolation may yield weaker privacy guarantees compared to those achievable through collaborative approaches. For example, secure multi-party computation~\cite{GO+98, GYB+22} demonstrates that collaborative computation can provide stronger privacy protections than operating independently. 

In this paper, we focus on a novel defense setting that differs from previous works in two key aspects:
(1) \textit{Partial Scenario:}
only a subset of clients choose to implement defense mechanisms according to their actual privacy needs; and
%
(2) \textit{Collaborative Mode:} defended clients can organize into a group, referred to as a defender coalition, to collaboratively mitigate MIAs.
Such collaboration is feasible as many real-world FL scenarios present conditions and opportunities that naturally enable it. 
For example, it can be formed by multiple devices owned by the same user in cross-device FL~\cite{KJM+16}, or several institutions under a common administration in cross-silo FL~\cite{LTS+20, SME+20}. Nevertheless, this new defense setting introduces an additional vulnerability arising from the training divergence between defended and undefended clients. As shown in Figure \ref{fig:hamploss}, member samples from defended clients (the \textit{member} line) can be easily distinguished from non-member samples from undefended clients (the \textit{non-member-IFL} line), even though they are indistinguishable from non-members that are entirely external to the FL system (the \textit{non-member-OFL} line).

To effectively mitigate membership privacy leakage for defended clients, it is essential to minimize behavioral discrepancies between their member samples and both types of non-member samples. Given that these behavioral differences will become more pronounced as local models increasingly memorize individual samples~\cite{ZCB+17}, our core idea is to limit the exposure of the training dataset, thereby constraining the extent of memorization in local models and, through federated aggregation, in the global model throughout the training process.
To this end, we present \textbf{CoFedMID}, a \underline{Co}llaborative \underline{Fed}erated \underline{M}embership \underline{I}nference \underline{D}efense framework, which consists of three key modules as follows.
First, to enforce the memorization constraint, we introduce a class-guided partition module that enables selective local training samples but ensures that the defender coalition captures the full data distribution to preserve model utility.
Restricting the usage of training samples inevitably leads to utility degradation, especially as the coalition size increases.
We then introduce a utility-aware compensation module, which reintroduces informative samples according to their contribution to model performance.
Last but not least, to further enhance defense strength, CoFedMID incorporates an aggregation-neutral perturbation module to inject carefully designed random noise into client updates.
It is worth noting that our framework employs a defender coalition to strengthen both privacy protection and model utility, without incurring any additional privacy cost beyond standard FL. Although this study focuses on a single coalition, our framework can be naturally extended to multiple ones.



In summary, we make the following key contributions:
\begin{itemize}[leftmargin=0.2in]
    \item We present the first study to consider partial and collaborative defense settings in FL, beyond previous uniform and independent privacy-preserving approaches. 

    \item We propose CoFedMID, an effective FL defense framework that mitigates local data memorization and enhances protection and utility through client collaboration.

    \item We design three modules to balance defense and utility: class-guided partition for selective training samples, utility-aware compensation to recycle contributive samples and prevent overconfidence, and aggregation-neutral perturbation to boost defense with noise injection.

    \item We evaluate CoFedMID against seven trajectory-based MIAs and six baseline defenses on three benchmark datasets, demonstrating its superior defense capability and high utility under various FL settings.
\end{itemize}

The paper is organized as follows. Section~\ref{sec:prob} and Section~\ref{sec:chal} provide the problem formulation and discuss the limitations of existing methods, respectively. Section~\ref{sec:met} presents our CoFedMID framework, followed by experimental results in Section~\ref{sec:exp}. Section~\ref{sec:dis} provides limitations and discussion, Section~\ref{sec:rel} reviews related work, and Section~\ref{sec:con} concludes the paper.
\section{Problem Formulation}
\label{sec:prob}

\subsection{Federated Learning}
Federated Learning (FL)~\cite{MMR+17} is a distributed learning paradigm where multiple clients (e.g., mobile devices or institutions) collaboratively train a global model without sharing raw data. In this paper, we consider a typical supervised FL setup with a central server and a set of $K$ clients, where each client $k\in [K]$ holds a private dataset $D_k$. The objective is to learn a globally shared model $f$ with parameters $\tilde{\theta}\in \mathbb{R}^P$ by solving the following empirical risk problem
{\small
\begin{align*}
&\tilde{\theta} = \min_{\theta\in\mathbb{R}^P}\left\{\mathcal{L}(\theta) \triangleq \sum_{k=1}^K w_k \mathcal{L}_k(\theta; D_k)\right\}, \\
&\text{where } \mathcal{L}_k(\theta; D_k) \triangleq \frac{1}{|D_k|} \sum_{(\x,\y)\in D_k} \ell(f_\theta(\x), \y).
\end{align*}
}%
Note that $\y\in[0,1]^N$ is a one-hot encoded label over a class space $\mathcal{S}$ of size $N$, where $\y_c=1$ indicates the ground truth class. 
$f_\theta(\x)\in\mathbb{R}^N$ refers to the predicted confidence probabilities across the $N$ classes and $\ell(\cdot, \cdot)$ denotes the loss function (e.g., cross-entropy).
The most fundamental approach for solving the above FL optimization problem is federated averaging (FedAvg) \cite{MMR+17}, where the parameters of the global model are updated through iterative training rounds. During each round $t\in [T]$, each client initializes its local parameters with the current global parameters $\tilde{\theta}^t$, performs local optimization on its respective dataset $D_k$, and sends the updated local parameters $\theta_k^t$ to the server. The server then aggregates the received parameters using a weighted average, i.e., $\tilde{\theta}^{t+1} = \sum_{k=1}^K w_k \cdot\theta_k^t$.

\subsection{Attack Formulation}

\noindent\textbf{Attack Objective.} 
Similar to how it works in a centralized setting, MIA in FL aims to determine whether a given sample $(\x,\y)$ was included in the target client's local training dataset $D_{\text{tar}}$. 
Formally, this can be formalized via a membership inference function:
\begin{align*}
\mathcal{A}(\theta, (\x,\y)) = 
\begin{cases}
1, & \text{if } (\x,\y) \in D_{\text{tar}} \quad \text{(member)} \\
0, & otherwise \quad \text{(non-member)}
\end{cases}
\end{align*}
where $\mathcal{A}$ denotes an attack model that takes as input a sample $(\x,\y)$ and the model parameters $\theta$ (e.g., local or global), and outputs a binary prediction indicating whether $(\x,\y)$ is a member of $D_{\text{tar}}$.

\noindent\textbf{Attacker Capability.}
Following prior work~\cite{GYB+22, CHE+24, LHL+24, ZGL+25}, we assume that adversaries may be either recipients of the global model parameters $\tilde{\theta}^t$ (i.e., the other clients) or recipients of local model parameters from all participating clients $\{\theta^t_k\}_{k\in [K]}$ (i.e., the central server). Specifically, these adversaries are considered \textit{honest-but-curious}, i.e., they adhere to the FL protocol but attempt to infer whether a particular sample is included in the \textit{target client}'s local dataset. 
We refer to attacks that exploit the temporal evolution of these model snapshots to infer membership information as \textbf{trajectory-based MIAs}.

\subsection{Defense Formulation}
\label{sec:defense_form}

\textbf{{Uniform and Partial Scenarios.}}
Without loss of generality, we design our framework from the client’s perspective.
Based on whether all clients participate in the defense, existing defense scenarios can be categorized as either \textit{uniform} or \textit{partial}. In uniform defenses, all clients are involved, whereas in partial defenses, only a subset of clients participates.
The uniform defense scenario has been extensively explored in existing works~\cite{AMZ+24, ZGL+25}, while the partial defense scenario is overlooked.
However, it is realistic and necessary in practice: clients with sensitive data require strong protection, while those holding low-risk or public data need few or no additional defenses~\cite{JZY+15, EHS+15, LJL+24}.
Moreover, enforcing a uniform defense strategy across all clients may unnecessarily compromise the model performance~\cite{HLY+23, BLH+24}. This arises from the inherent trade-off between privacy protection and model utility in most defense mechanisms~\cite{AMZ+24, ZHC+17, ZCK+24, ZGL+25}. 
Unlike prior studies that work on a uniform defense setting, this work focuses on a partial defense scenario.

\noindent\textbf{Independent and Collaborative Modes.}
As guided by the FL protocol, each client communicates only with the central server and does not exchange any information with other clients. Likewise, existing membership inference defenses are developed individually and independently in FL. We refer to this case as the independent mode.
Beyond the independent mode, we explore a collaborative mode in which clients cooperate to defend against MIAs.
We introduce the term \textit{defender coalition} to denote a group of clients that collaboratively implement defense strategies, assuming that all clients within the coalition behave honestly and faithfully follow the designed defense framework.
Such coalition-based cooperation may occur in real-world FL, e.g., multiple devices owned by a single user in cross-device FL~\cite{KJM+16}, or institutions under shared administration in cross-silo FL~\cite{LTS+20, SME+20}.
Similar to secure multi-party computation~\cite{GO+98}, this collaborative way can offer new opportunities to enhance privacy protection and maintain model utility in FL.

%

\noindent\textbf{Defense Objective.} 
Our defense framework is designed to achieve the following objectives:
\begin{itemize}[leftmargin=0.2in]
    \item Strong membership privacy: Mitigate MIAs launched by both the server and clients.  
    \item High model utility: Preserve model utility with performance comparable to that of undefended systems.
    \item Low overhead: Impose minimal communication and computational overhead compared with standard FL.  
    \item Robustness: Remain effective against adaptive adversaries that attempt to bypass the defense.  
\end{itemize}


\section{Existing Defenses and Limitations}
\label{sec:chal}

\subsection{Existing Defenses} 
We now review existing defense mechanisms that are typically designed for the uniform scenario and operate independently.

\noindent\textbf{Perturbation-based Defenses.}
Since local updates (i.e., weights or gradients) in FL can inadvertently reveal membership information to potential attackers, perturbation-based defenses are designed to mitigate this risk by obscuring these exchanged updates. 
Common strategies include employing model sparsification to limit exposed parameters~\cite{MLS+19, LJL+23}, injecting noise to mask original signals~\cite{YXF+22, BLH+24, ZGL+25}, and leveraging differential privacy techniques to provide formal privacy guarantees~\cite{AMC+16, MNJ+22}.

\noindent\textbf{Overfitting-based Defenses.}
Since overfitting is a major contributor to membership leakage~\cite{YSG+18}, overfitting-based defenses focus on narrowing the generalization gap between training and test samples to reduce the risk of MIAs. These defenses have been extensively studied in centralized settings, particularly for protecting against black-box attacks. 
Representative methods include regularization approaches (e.g., adversarial learning~\cite{NMS+18} and Mixup~\cite{ZHC+17, CZL+21}) and transfer learning techniques (e.g., knowledge distillation~\cite{SVH+21, TXM+22}).
Previous efforts to adopt centralized defenses in federated settings have failed to mitigate the threats posed by powerful trajectory-based MIAs~\cite{GYB+22, ZGL+25}.

\subsection{Limitations}
\label{subsec:limit}
We analyze the limitations of existing defenses based on two factors: ineffectiveness against emerging trajectory-based MIAs and inadequacy in handling complex non-IID data distributions.

\noindent\textbf{Limitation 1: Existing defense mechanisms are often ineffective against emerging trajectory-based MIAs.}
In FL settings, attacks have evolved beyond exploiting single model updates~\cite{NMS+19, MLS+19} to leveraging temporal information across multiple rounds, which significantly boosts the performance of advanced trajectory-based MIAs~\cite{GYB+22, ZGL+25}.
This means that historical model snapshots throughout the training process become vulnerable to potential adversaries. 
Consequently, existing overfitting-based defenses, primarily focused on reducing overfitting at convergence and protecting a single model snapshot (e.g., the final one), are insufficient.
Furthermore, perturbation-based defenses protect the entire training process by continuously adding empirical or theoretical perturbations to model updates, making it hard to balance protection strength and model utility.

\begin{figure}[t]
	\centering
	\subfigure[No defense.]{
		\centering
		\includegraphics[width=0.22\textwidth]{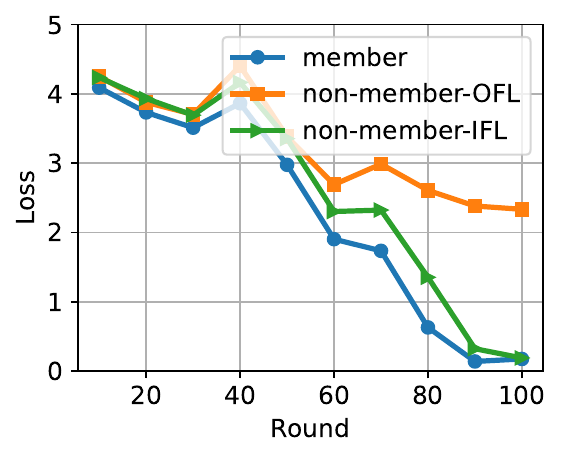}
		\label{fig:noneloss}
	}
	\subfigure[With HAMP defense~\cite{ZCK+24}.]{
		\centering
		\includegraphics[width=0.22\textwidth]{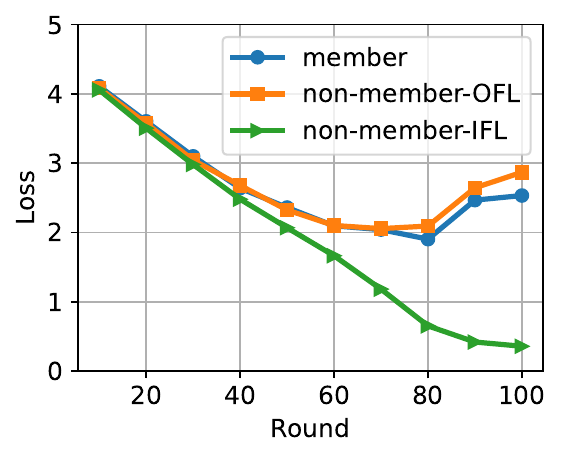}
		\label{fig:hamploss}
	}
	\caption{Loss trajectories of members and non-members.}
	\label{fig:nonmember}
\end{figure}

\noindent\textbf{Limitation 2: Existing defenses are weakened in the partial scenario due to the complexity of non-member sample distributions.} 
Existing defenses typically consider a single type of non-member samples (i.e., entirely unseen by the target model)~\cite{CDY+20, ZGL+25} and overlook the distinctive characteristics of non-member distributions in FL, which can be divided into two distinct types:
(1) \textbf{non-member-OFL}, denoting samples entirely \textit{outside} the FL system (i.e., not included in any client’s dataset); 
and (2) \textbf{non-member-IFL}, referring to samples belonging to other clients \textit{within} the FL process but not to the target client. 
This heterogeneity among non-member samples introduces significant challenges for defense design, as their behavior (e.g., loss and confidence) can vary substantially after defense implementation.
We present loss trajectories of member and non-member samples without any defense in Figure~\ref{fig:noneloss}. Non-member-IFL samples are less vulnerable, as they are trained by other clients and exhibit similar model behavior to the target client, particularly under IID settings.
However, this advantage diminishes in a partial defense scenario, where some clients actively defend while others do not. As illustrated in Figure~\ref{fig:hamploss}, while the defense effectively reduces distinguishability between member and non-member-OFL samples, it unexpectedly amplifies the difference between member and non-member-IFL samples.
Consequently, the divergence in training strategies between defended and undefended clients amplifies the distinction between member and non-member-IFL samples, which renders sensitive member data more vulnerable in partial defense scenarios.
\section{Methodology}
\label{sec:met}
The ultimate goal of membership inference defenses is to ensure that member and non-member samples are statistically indistinguishable with respect to the target model’s behavior~\cite{HLY+23}. However, the limitations discussed above underscore the inadequacy of existing approaches in effectively closing this gap. 
To address this issue, we introduce \textbf{CoFedMID} (\underline{Co}llaborative \underline{Fed}erated \underline{M}embership \underline{I}nference \underline{D}efense), a novel framework designed to defend against MIAs in the FL setting, with a specific focus on the more challenging partial defense scenario.
It is worth noting that our framework can be seamlessly applied to the uniform defense setting where all clients participate, as discussed in Section \ref{subsec:def}.
Unless otherwise specified, the term ``client'' in the following sections refers exclusively to participating members of the defender coalition.
\begin{figure}[t]
    \centering
    \includegraphics[width=0.48\textwidth]{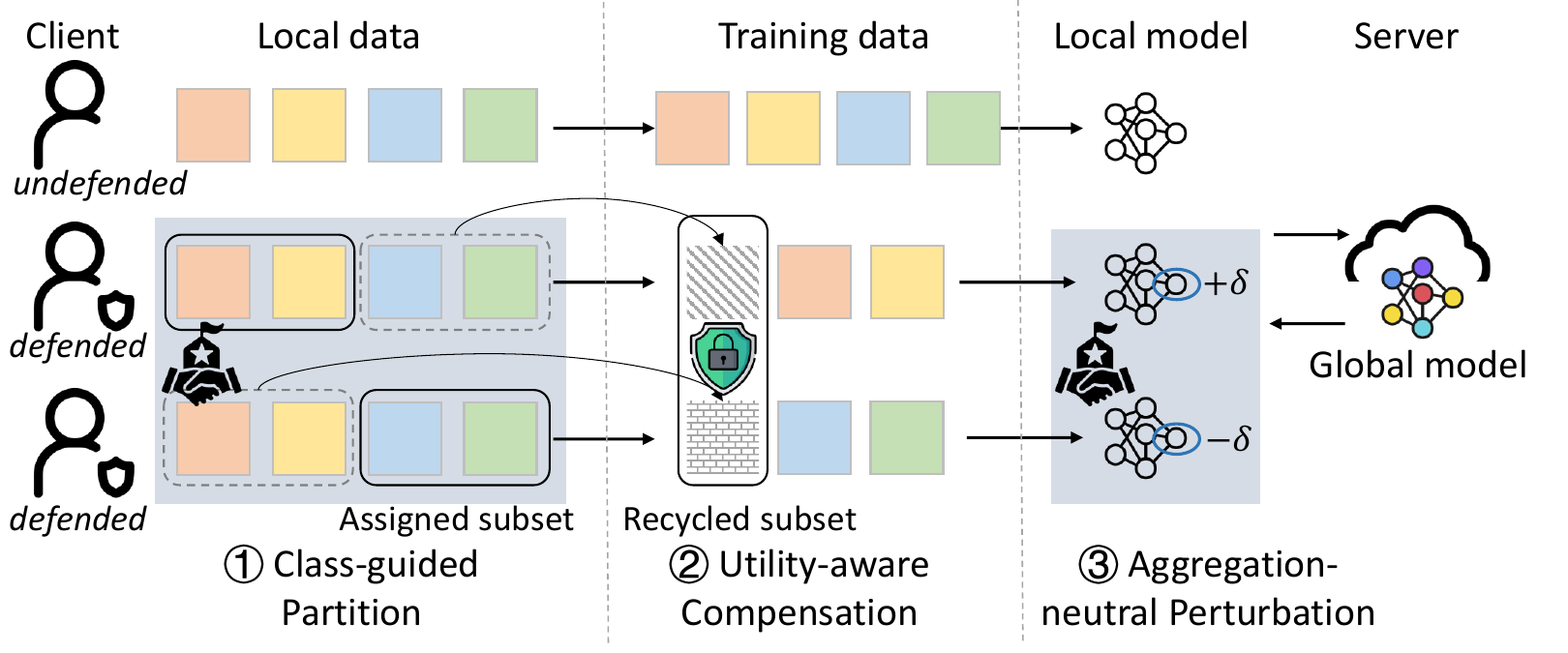}
    \caption{Overview of the proposed defense framework.
    It consists of three modules: \ding{172} class-guided partition, \ding{173} utility-aware compensation, and \ding{174} aggregation-neutral perturbation.
    Both \ding{172} and \ding{174} are performed in a collaborative mode.}
    \label{fig:overview}
\end{figure}

\noindent\textbf{Overview of CoFedMID}. The core objective of CoFedMID is to offer effective defense capabilities while preserving the overall utility of the FL system. 
This is achieved by enabling each client to update their model parameters using a carefully selected subset of local training samples, followed by the injection of strategically designed random perturbations. 
Consequently, the extent of training data memorization is substantially reduced in local models and, through federated aggregation, in the global model at each training round. 
This approach effectively mitigates vulnerability to trajectory-based MIAs (addressing Limitation 1) and decreases the distinguishability between member samples and both types of non-member samples (addressing Limitation 2). 

An overview of the framework is depicted in Figure~\ref{fig:overview}. In summary, CoFedMID comprises three key modules: \textit{class-guided partition}, \textit{utility-aware compensation}, and \textit{aggregation-neutral perturbation}. The first and third modules are executed collaboratively among clients, while the second one is performed independently by each client. We provide detailed descriptions of each module below.

\begin{remark}
Note that the first and third modules rely on a coordinator to facilitate the collaboration process. Crucially, this coordinator neither accesses any client's private data nor model parameters, nor does it impose substantial computational overhead. In our implementation, this role is randomly assigned to an arbitrary participating client. 
\end{remark}

\subsection{Class-guided Partition}
\noindent\textbf{Design Intuition}. 
The key insight underlying our method is that frequent exposure of training samples during the learning process can increase the likelihood of over-memorization~\cite{ZCB+17, SCR+17, YSG+18, CNL+19}, leading to distinct model behaviors on training samples and elevating the risk of membership privacy leakage~\cite{SCS+20, RJW+22, LRY+24}. As shown in Figure~\ref{fig:nonmember}, the gap in average loss between member samples and both types of non-member samples widens significantly after 60 rounds, regardless of whether defenses are applied.
Inspired by this observation, our defense aims to \textbf{decrease the frequency of each training sample’s usage to limit the target model’s memorization.} 
Although similar strategies like early stopping~\cite{HLY+23, ZTZ+23} have been investigated, they apply uniformly across all training samples and fail to account for sample-specific characteristics, resulting in suboptimal defense effectiveness.

We mitigate this issue by enabling clients to train their local models on a carefully selected subset of samples from specific classes during each training round. We propose a class-guided partition strategy that assigns minimally overlapping class subsets to clients, effectively limiting memorization while preserving model performance. 
This module operates through two subroutines: (1) \textit{bounded class assignment}, which allocates class subsets to clients based on refined assignment principles, and (2) \textit{decay-based assignment}, which gradually reduces the number of training samples each client uses as training progresses. 
Due to space constraints, we defer the pseudocode to Appendix \ref{sec:app-alg} (Algorithm~\ref{alg:partition}).

\noindent\textbf{Bounded Class Assignment.} 
We begin by partitioning the dataset classes among the coalition clients, where each client is assigned a unique subset of classes and restricts training to its corresponding local samples. 
To balance privacy preservation and model utility, this assignment should ensure that:
1) \textit{minimum overlap}: class subsets assigned to different clients should exhibit minimal redundancy, thereby reducing redundant exposure of training samples and mitigating membership leakage risks;
2) \textit{complete coverage}: the coalition as a whole should span the entire label space, ensuring that the learned model captures a comprehensive representation of the underlying data distribution.
Formally, let $\mathcal{C}\subseteq [K]$ denote a defender coalition consisting of $d$ clients, 
where each client $k \in \mathcal{C}$ is assigned a class subset $\mathcal{S}_k \subseteq \mathcal{S}$ with size $m$.
The goal is to determine the minimally overlapping class subsets $\{\mathcal{S}_k\}_{k=1}^{d}$ w.r.t. coalition $\mathcal{C}$ such that:
\begin{align*}
\min_{\{\mathcal{S}_k\}_{k=1}^{d}} \sum_{1 \leq i < j \leq d} |\mathcal{S}_i \cap \mathcal{S}_j|
\quad \text{subject to: }
\bigcup_{k=1}^{d} \mathcal{S}_k = \mathcal{S}, |\mathcal{S}_k| = m.
\end{align*}

To address this assignment problem, we draw inspiration from the Johnson-type bound~\cite{FPR+87, BCB+14}, 
which characterizes the theoretical minimum overlap between fixed-size subsets.  
In our setting, this leads to a lower bound on the maximum pairwise class overlap among all clients.  
Formally, the minimum achievable value of the maximum class overlap between any two clients $\lambda_{theo}$ is bounded as:
\begin{align*}
\lambda_{theo}=\max_{1 \leq i < j \leq d} |\mathcal{S}_i \cap \mathcal{S}_j| \geq \left\lceil \frac{d m^2 - N m}{N(d-1)} \right\rceil.
\end{align*}

Motivated by this theoretical bound, we develop a heuristic algorithm that approximates near-optimal class-to-client assignments, allocating classes iteratively while controlling pairwise overlaps across the coalition.
The assigned training subset $D_k^{\text{ass}}$ for client $k$ is then constructed by collecting all local samples whose labels belong to the allocated class subset $\mathcal{S}_k$ as follows:
\begin{align*}
D_k^{\text{ass}} = \{ (\x,\y) \mid (\x,\y) \in D_k, \ \y_c \in \mathcal{S}_k \}.
\end{align*}

\noindent\textbf{Decay-based Assignment.} 
Next, we dynamically adjust the class subset size $m$ throughout the training process to further limit the exposure of training samples. Prior studies~\cite{BYL+09, ZYL+18} have demonstrated that, during the early stages of training, clients benefit from diverse data to effectively capture the underlying distribution. As the global model gradually converges and stabilizes, fewer samples are sufficient to preserve its performance. Building on this insight, we employ a linear decay function that reduces $m$ from \( m_{\max} \) to \( m_{\min} \) at a rate of $\alpha$ over the total number of training rounds $T$, that is, 
{\small
\begin{align*}
m^t = \max \left( m_{\min}, \; m_{\max} - \gamma (t-1) \right) \text{ where } \gamma=\frac{m_{\max} - m_{\min}}{T - 1}.
\end{align*}
}%
Here $m^t$ denotes the class subset size at round $t\in [T]$.
Nonlinear strategies, including exponential and polynomial decay, are also investigated in our experiments (see Section \ref{subsec:details}).

The bounded class assignment is implemented by the coordinator, who knows the label-space IDs and assigns classes to members such that their union collectively covers the entire label space.
Since our assignment centers on the coalition's overall class coverage, the coordinator does not require access to the specific class distribution of any individual member.
Each coalition member uses assigned classes if they have corresponding samples; otherwise, the assignments are ignored. 
We highlight that the effectiveness of our class-guided partition module arises not only from its core objective of minimizing the exposure of each client’s local training data, but also from its deliberate disruption of the implicit mapping between local datasets and their resulting model parameters. This is achieved through our randomized, adaptive class assignment procedure that increases uncertainty for potential adversaries attempting to infer whether a target sample belongs to a client, hence further amplifying the membership privacy protection. 

\subsection{Utility-aware Compensation}

\noindent\textbf{Design Intuition.} 
While limiting the exposure of training samples to local models can enhance the defense against MIAs, it often comes at the cost of degraded model utility due to the limited data utilization.
This trade-off becomes more pronounced as the size of the defender coalition grows. 
As demonstrated in Appendix~\ref{app:usage}, a larger number of participating clients results in a smaller portion of training data being utilized, thereby leading to a significant utility loss.

To restore model performance, we propose selectively using a subset of samples that were previously excluded by the participating client in the coalition, referred to as \textit{recycled samples}, while carefully managing their risks of membership privacy leakage.
To this end, we introduce a utility-aware compensation module that consists of two components: (1) \textit{sample recycling}, which identifies the recycled samples at specified training rounds that significantly improve the utility of the corresponding local models,
and (2) \textit{confidence regularization}, which constrains overconfidence on these recycled samples to control privacy risks. The complete procedure is provided in the Appendix Algorithm~\ref{alg:compensation}.

\subsubsection{Sample Recycling}
We now present a high-level overview of our sample recycling process. 
To effectively mitigate utility degradation, two pivotal considerations must be addressed: when to recycle and which samples to recycle. 
Recall that as the global model converges, the number of classes $m$ assigned to each client gradually decreases, leading to a smaller portion of training samples being selected. Based on this observation, we recommend starting sample recycling at an intermediate stage $t'$ rather than from the outset. 
A practical guideline for determining $t'$ is to monitor $m$ and trigger recycling once it falls below a predefined threshold.

For the first consideration, we propose selecting recycled samples according to their ``compensation'' for utility degradation, 
which can be quantified by the magnitude of \textit{loss reduction} evaluated on the updated local model upon reusing these samples.
This is inspired by curriculum learning techniques~\cite{BYL+09, WXC+21}, in which the model is trained progressively by first learning from easier samples with lower losses to achieve stability, followed by learning harder ones with larger losses to enhance generalization. 
Specifically, we introduce the notion of a \textit{sample interval} which refers to a subset of samples exhibiting comparable levels of learning difficulty. 
At each training round $t\in[t', T]$, every client $k$ in the coalition partitions its local dataset $D_k$ into $M$ disjoint sample intervals, denoted as $\mathcal{I}_k = \{ \mathcal{I}^1_k, \mathcal{I}^2_k, \ldots, \mathcal{I}^M_k \}$\footnote{For brevity, we simplify the notations by omitting the round identifier $t$, and will continue to do so throughout the remainder of this section.}.
Guided by a carefully designed recycling strategy, each client then selects an optimal interval $\mathcal{I}^{j}_k$ that maximizes utility compensation, and the set of recycled samples $D_k^{\text{rec}}$ is formed as the set difference $\mathcal{I}^{j}_k \setminus D_k^{\text{ass}}$.

The key technical challenges involved in implementing this approach are twofold: 1) constructing sample intervals that precisely capture the learning difficulty of training samples; and 2) establishing a robust strategy for selecting the optimal interval. We tackle these challenges through the following three steps, with Step 1 addressing the first challenge and Steps 2-3 addressing the second.

\textbf{Step 1: Sample Interval Initialization.} 
The construction of sample intervals starts by evaluating the per-sample loss values using the initial local parameters (i.e., the global model obtained from the preceding round).
Considering the distribution of these losses varies across training rounds, we apply min-max normalization to ensure that the partitioning reflects the relative difficulty of samples in the current training context.
Formally, given the model parameters $\tilde{\theta}$, we denote 
$\ell_i$ as the original loss value of the $i$-th training sample in $D_k$. 
Then the normalized loss value is calculated as
$
\tilde{\ell}_i = \frac{\ell_i - \ell_{\min}}{\ell_{\max} - \ell_{\min}}, 
$
where \(\ell_{\min}\) and \(\ell_{\max}\) are the minimum and maximum values among the original losses, respectively. Next, we sort $D_k$ by normalized losses in ascending order and divide it into $M$ intervals of approximately equal size. 
Consequently, the sample intervals can be initialized as:
\[
\mathcal{I}_k^j = \left[ \tilde{\ell}_{(q_{j-1})},\; \tilde{\ell}_{(q_j)} \right), \quad \text{for } j = 1, 2, \ldots, M,
\]
where \(\tilde{\ell}_{(q_j)}\) denotes the \(q_j\)-th smallest normalized loss value,
with \(q_j = \lfloor (j/M) \cdot |D_k| \rfloor\), \(q_0 = 0\), and \(q_M = |D_k|\).

This initialization guarantees that each interval initially contains an equal number of samples, thereby eliminating quantity-induced bias and facilitating fair exploration in the early learning stages. Furthermore, the design allows the number of samples per interval to be dynamically adjusted over time, reducing the recycled sample size in later stages to limit additional exposure.

\textbf{Step 2: Sample Interval Evaluation.}
Once the sample intervals are constructed, the next step is to develop a sample recycling strategy that selects the optimal sample interval. 
As previously noted, our intuition is that the contribution of a training sample to utility improvement can be quantified by the reduction in loss values observed between local models trained with and without that sample~\cite{GAB+17, TYX+24}. We treat this loss reduction as a reward signal to guide the selection of sample intervals throughout the training process.

Specifically, given an arbitrary sample interval $\mathcal{I}^{j}_k$ and its corresponding set of recycled samples $D_k^{\text{rec}}$, we denote $\theta_k^{\prime}$ as the initial local model and denote $\theta_k$ as the updated local model trained on \(D_k^{\text{rec}} \cup D_k^{\text{ass}}\).
Then the reward value $r$ w.r.t. the interval is defined as the average loss reduction computed over a validation set \(D_k^{\text{val}}\):
\begin{align}
\label{eq:ori-reward}
r = \frac{1}{|D_k^{\text{val}}|} \sum_{(\x, \y) \in D_k^{\text{val}}} \left[ \ell(f_{\theta_k^{\prime}}(\x),\y) - \ell(f_{\theta_k}(\x),\y) \right].
\end{align}
We attribute this reduction to \(D_k^{\text{rec}}\) as our recycling strategy prioritizes high-loss samples, which are expected to contribute more to model performance than \(D_k^{\text{ass}}\) composed of samples with varying loss levels~\cite{KAF+18}.
In line with previous studies~\cite{GAB+17, TYX+24}, we rescale the range of raw rewards to \([-1, 1]\) using the 20th and 80th percentiles of historical rewards. The normalized reward \(\tilde{r}\) is defined as 
$\tilde{r} = \max(-1, \min(1, \frac{2(r - r_{20})}{r_{80} - r_{20}} - 1))$.
This transformation smooths out extreme fluctuations and stabilizes the learning signal during interval selection.

\textbf{Step 3: Sample Interval Selection.}
In this step, our goal is to select the optimal sample interval by maximizing the reward values defined above. This process inherently involves a core trade-off between \textit{exploitation} (i.e., choosing intervals that have previously led to substantial loss reductions) and \textit{exploration} (i.e., exploring alternative intervals that may yield greater reductions).
We model this selection process as a multi-armed bandit (MAB) problem~\cite{MAT+08}, where each ``arm'' represents a sample interval.
We adopt the EXP3~\cite{APC+02, TYX+24} algorithm, which balances exploration and exploitation by assigning a carefully designed weight to each arm based on its observed reward, enabling adaptation to non-stationary and noisy reward distributions during training.
Upon receiving a normalized reward, the algorithm updates the weight of the selected arm based on the reward and its sampling probability.
Moreover, we control recycling volume when an interval contains many less informative samples, especially in the later stages. 
When the candidate samples exceed a predefined threshold $r_l$, random subsampling is applied to prevent overexposure to easy examples and ensure training resources focus on more valuable data.

\subsubsection{Confidence Regularization}
Intuitively, sample recycling tends to select data points with high loss gaps—often those near decision boundaries or behaving as outliers~\cite{LZZ+21, CCC+21}, which are consequently more vulnerable to MIAs. Thus, to mitigate this increased vulnerability, we regularize the model’s confidence on recycled samples by encouraging smoother output distributions and reducing overconfident predictions.
Following~\cite{ZCK+24}, for each recycled sample $(\x,\y) \in D_k^{\text{rec}}$, we construct a soft label $\tilde{\y}\in\mathbb{R}^N$ by assigning the predicted confidence $p_c=(f_{\theta_k}(\x))_c$ for the ground-truth class $\y_c$ while evenly distributing the residual probability mass $1 - p_c$ across the other $N-1$ classes.
Together with an entropy-based constraint, the confidence-regularized loss function is formulated as follows:
\begin{align*}
\ell_{cr}(f_{\theta_k}(\x),\y) \triangleq \mathrm{KL}(\log f_{\theta_k}(\x) \, \| \, \tilde{\y}) - \mu \cdot \mathrm{Entropy}(f_{\theta_k}(\x)),
\end{align*}
where \(\mu\) is a hyperparameter balancing the two terms.  
Together with the standard loss function $\ell(\cdot)$, the training objective for client \(k\) is to minimize:
{\small
\begin{align*}
\mathcal{L}_k(\theta_k;D_k) \triangleq \sum_{(\x,\y) \in D_k^{\text{rec}} \cup D_k^{\text{ass}}} \ell(f_{\theta_k}(\x), \y) + \sum_{(\x,\y) \in D_k^{\text{rec}}} \ell_{cr}(f_{\theta_k}(\x),\y).
\end{align*}
}

In summary, we propose a utility-aware compensation algorithm designed to restore model performance by selectively reusing a minimal subset of previously excluded samples. We employ a multi-armed bandit framework to dynamically prioritize sample intervals that offer the greatest utility improvement and incorporate a specialized regularizer to mitigate privacy risks associated with the recycled samples. 
The above process uniformly treats all training samples, classes, and protected clients.
Due to dynamic class assignment among different rounds, where no fixed client-class mapping exists, samples can originate from either assigned classes or serve as recycled data.

\subsection{Aggregation-neutral Perturbation}

\textbf{Design Intuition.} 
Unlike the aggregated global model, local models are more vulnerable as they directly expose client-specific information to potential attackers.
Inspired by multi-party secure computation~\cite{GO+98, MVP+19}, we introduce an aggregation-neutral perturbation module to further strengthen the overall defense capability by perturbing local models before aggregation. Specifically, this module injects carefully crafted noise into the local parameters of clients within the defender coalition $\mathcal{C}$, effectively masking private information while preserving the correctness of global aggregation.
To achieve this, the injected noise is constructed to meet two key criteria: (1) its weighted sum across all collaborative clients is zero, and (2) it is applied locally to only a small subset of model parameters. The complete procedure is presented in the Appendix Algorithm~\ref{alg:perturb}.

\noindent\textbf{Noise Assignment.}
To satisfy the first criterion, we design the client noises to be orthogonal to the aggregation weight vector in \( \mathbb{R}^d \). We first generate a preliminary Gaussian noise \( \delta'_k \sim \mathcal{N}(0, \sigma^2 I) \) for each client $k\in\mathcal{C}$ independently, where $\sigma$ is the perturbation strength.
Given the aggregation weights of all clients \( \mathbf{w} = [w_1, \dots, w_d] \), the scalar noise \( \delta_k \) for client \( k \) is generated by projecting a base noise \( \delta'_k \) onto the direction orthogonal to \( \mathbf{w} \):
\begin{align*}
\delta_k = \delta'_k - \frac{w_k}{\|\mathbf{w}\|_2^2} \sum_{j=1}^{d} w_j \delta'_j, \quad \text{for } k = 1, \ldots, d.
\end{align*}
This construction guarantees that the perturbations cancel out during aggregation, that is, \( \sum_{k=1}^d w_k \delta_k = 0 \).

\noindent\textbf{Model Perturbation.}
To satisfy the second criterion, perturbations are selectively applied to a subset of model parameters (or gradients) so as to guarantee their defensive efficacy while incurring small disruption.
Our approach perturbs only the final few layers while keeping earlier layers unchanged. This is because deeper layers (e.g., the classification layer) closer to the output exert a greater influence on predictions~\cite{HSP+15, ZBK+16}.
Specifically, we begin by flattening all model parameters into a one-dimensional vector, arranged sequentially by layer. For a perturbation ratio \(r_P \in (0, 1.0]\), each client perturbs the last \( P^\prime \triangleq \lfloor r_P \cdot P \rfloor \) parameters as
\begin{align*}
\theta_k^{(i)} \gets \theta_k^{(i)} + \delta_k, \quad \text{where } i\in[P - P^\prime + 1,P].
\end{align*}

Since clients may not know their exact aggregation weights \(w_k\), a practical approach is to approximate them by the size of each client’s local dataset, assuming that aggregation weights are proportional to data volume. For FedAvg, we compute weights based on each client’s local training data volume rather than assigned classes. This reduces communication overhead, as members do not need to report data size every round, and helps hide the defense coalition from an honest-but-curious server.

\subsection{Algorithm Summary}
We present the complete pseudocode of CoFedMID in the Appendix Algorithm~\ref{alg:overall}.
Prior to training, a randomly chosen client serves as the coordinator, allocating data classes and configuring noise perturbations for all defended members.
Each client is then assigned a series of class subsets with controlled overlap, which is dynamically reduced each round following a decay schedule.
Subsequently, during each training round, clients train on partial local data consisting of assigned and recycled samples, balancing utility preservation with privacy protection. Moreover, their model parameters are then perturbed with allocated neutral noise to obscure individual training behaviors.
Meanwhile, clients outside the coalition conduct standard local training. The server ultimately aggregates all updated parameters via weighted averaging.
Additionally, while our design focuses on a single defender coalition, it can be extended to multiple ones, with each employing the proposed defense framework independently.

Our framework operates within a coalition formed by clients collaborating on defense, and specifically, such a coalition can arise in two practical scenarios:
(a) Coalition with prior relationships (our focus): Clients managed by the same supermaster (e.g., a user’s phone and iPad participating in Gboard training via FL) or those that have interacted previously (e.g., devices joining the same IoT network) can autonomously form a trustworthy coalition~\cite{ASW+23}. The coordinator is randomly selected from these clients through a decentralized random election mechanism~\cite{ZNM+24}.
(b) Coalition without prior relationships (general case): Clients can request to join a coalition through a trusted third party~\cite{CSL+22}, who acts as the coordinator and manages the coalition formation. Before FL begins, each coalition member reports its local data size to the coordinator.
For coordination, scenario (a) can leverage stable pseudonymous identities (e.g., a client index~\cite{KDK+23}) while scenario (b) requires no client coordination at all. Consequently, in both scenarios, all subsequent communications between any client and the coordinator, including class assignments, data volumes, and scalar noise, do not access private data or model parameters.
\section{Evaluation}
\label{sec:exp}

\subsection{Experiment Setup}
\textbf{Datasets and Models.}
Following previous works on MIA research~\cite{WRL+24, ZCK+24}, our experiments focus on image classification and are conducted on three benchmark datasets: CIFAR10~\cite{KAH+09}, CIFAR100~\cite{KAH+09}, and TinyImageNet~\cite{YLX+15} on ResNet18~\cite{HKZ+16} and modified WideResNet-16-4~\cite{AMZ+24}. 
Their performance in various datasets and model architectures is demonstrated in Appendix~\ref{app:results}.

\noindent\textbf{Federated Setting.}
Unless otherwise specified, we consider a horizontal FL system comprising $K=10$ clients under an IID data distribution. 
We employ FedAvg~\cite{MMR+17}, where each client performs one local training epoch per round for up to 100 training rounds. The aggregation weight for each client is set proportional to the size of their local dataset. We direct readers to Section \ref{subsec:def} for extended scenarios involving a varying number of clients and non-IID data distributions.

\begin{table*}[h!]
    \setlength{\abovecaptionskip}{0pt}  
    \setlength{\belowcaptionskip}{0pt}  
    \centering
    \footnotesize
    \caption{AUC results of defense methods against different attacks on CIFAR100 with ResNet18.
    Lower values mean better defense.
    \textbf{Avg} is the average results over seven MIAs, and $\Delta$\textbf{Acc} denotes the test accuracy change relative to the undefended FL (original accuracy).
    ‘Pair’ denotes a defender coalition of two clients, whereas ‘Half’ refers to a coalition comprising half of the clients in the FL system.
    The best average results are highlighted in \textbf{bold}.
    }
    \label{tab:main_res18_c100}
    \begin{tabular}{l|l|ccccccc|cc}
    \toprule
    \textbf{Case} & \textbf{Defense} & \textbf{Loss-Series} & \textbf{Avg-Cosine} & \textbf{FedMIA-I} & \textbf{FedMIA-II} & \textbf{FTA-C} & \textbf{FTA-L} & \textbf{SeqMIA} & \textbf{Avg} $\downarrow$ & $\Delta$\textbf{Acc} $\uparrow$ \\
    \midrule
    \multirow{1}{*}{} 
       - & No Defense & $0.65 \pm 0.02$ & $0.78 \pm 0.02$ & $0.72 \pm 0.04$ & $0.82 \pm 0.00$ & $0.71 \pm 0.01$ & $0.80 \pm 0.01$ & $0.89 \pm 0.02$ & 0.77 & (0.46) \\
        \midrule
        
    \multirow{7}{*}{Pair} 
        & GradSparse & $0.69 \pm 0.03$ & $0.81 \pm 0.02$ & $0.80 \pm 0.05$ & $0.86 \pm 0.03$ & $0.76 \pm 0.04$ & $0.82 \pm 0.02$ & $0.94 \pm 0.03$ & 0.78 & -0.01 \\
        & GradNoise  & $0.71 \pm 0.00$ & $0.81 \pm 0.00$ & $0.72 \pm 0.01$ & $0.83 \pm 0.00$ & $0.77 \pm 0.01$ & $0.83 \pm 0.01$ & $0.90 \pm 0.01$ & 0.79 & +0.01 \\
        & DPSGD & $0.54 \pm 0.00$ & $0.54 \pm 0.00$ & $0.50 \pm 0.01$ & $0.53 \pm 0.00$ & $0.50 \pm 0.00$ & $0.54 \pm 0.01$ & $0.81 \pm 0.02$ & 0.57 & -0.01 \\
        & Mixup & $0.53 \pm 0.00$ & $0.73 \pm 0.00$ & $0.51 \pm 0.00$ & $0.65 \pm 0.00$ & $0.52 \pm 0.01$ & $0.60 \pm 0.00$ & $0.50 \pm 0.03$ & 0.57 & -0.01 \\
        & RelaxLoss & $0.65 \pm 0.00$ & $0.68 \pm 0.00$ & $0.73 \pm 0.00$ & $0.83 \pm 0.00$ & $0.68 \pm 0.01$ & $0.78 \pm 0.01$ & $0.65 \pm 0.03$ & 0.71 & -0.01 \\
        & HAMP & $0.61 \pm 0.00$ & $0.65 \pm 0.00$ & $0.73 \pm 0.00$ & $0.62 \pm 0.00$ & $0.51 \pm 0.01$ & $0.54 \pm 0.01$ & $0.55 \pm 0.02$ & 0.55 & -0.03 \\
        & \textbf{CoFedMID} & $0.54 \pm 0.00$ & $0.50 \pm 0.01$ & $0.52 \pm 0.01$ & $0.57 \pm 0.01$ & $0.51 \pm 0.01$ & $0.58 \pm 0.01$ & $0.54 \pm 0.05$ &  \textbf{0.52} & -0.01 \\ 
        \midrule

    \multirow{7}{*}{Half} 
        & GradSparse & $0.69 \pm 0.03$ & $0.80 \pm 0.02$ & $0.80 \pm 0.04$ & $0.86 \pm 0.03$ & $0.75 \pm 0.04$ & $0.82 \pm 0.02$ & $0.94 \pm 0.02$ & 0.78 & -0.01 \\
        & GradNoise & $0.62 \pm 0.11$ & $0.70 \pm 0.13$ & $0.62 \pm 0.17$ & $0.68 \pm 0.16$ & $0.62 \pm 0.14$ & $0.67 \pm 0.17$ & $0.74 \pm 0.20$ & 0.66 & -0.18 \\
        & DPSGD & $0.56 \pm 0.00$ & $0.55 \pm 0.01$ & $0.57 \pm 0.00$ & $0.53 \pm 0.00$ & $0.53 \pm 0.01$ & $0.53 \pm 0.00$ & $0.92 \pm 0.01$ & 0.59 & -0.05 \\
        & Mixup & $0.52 \pm 0.00$ & $0.66 \pm 0.00$ & $0.50 \pm 0.00$ & $0.64 \pm 0.00$ & $0.54 \pm 0.01$ & $0.57 \pm 0.00$ & $0.80 \pm 0.03$ & 0.57 & -0.06 \\
        & RelaxLoss & $0.66 \pm 0.00$ & $0.66 \pm 0.00$ & $0.72 \pm 0.00$ & $0.83 \pm 0.00$ & $0.66 \pm 0.01$ & $0.77 \pm 0.01$ & $0.69 \pm 0.02$ & 0.71 & -0.00 \\
        & HAMP & $0.56 \pm 0.00$ & $0.62 \pm 0.00$ & $0.70 \pm 0.00$ & $0.61 \pm 0.00$ & $0.55 \pm 0.01$ & $0.53 \pm 0.00$ & $0.77 \pm 0.08$ & 0.59 & -0.10 \\
        & \textbf{CoFedMID} & $0.50 \pm 0.00$ & $0.50 \pm 0.01$ & $0.54 \pm 0.01$ & $0.51 \pm 0.00$ & $0.52 \pm 0.01$ & $0.56 \pm 0.02$ & $0.59 \pm 0.03$ &  \textbf{0.51} & -0.03 \\

    \bottomrule
    \end{tabular}
\end{table*}

\begin{table*}[h!]
    \setlength{\abovecaptionskip}{0pt}  
    \setlength{\belowcaptionskip}{0pt}  
    \centering
    \footnotesize
    \caption{TF01 results of defense methods against different attacks on CIFAR100 with ResNet18.
    }
    \label{tab:main_res18_c100_TF01}
    \setlength{\tabcolsep}{5pt} 
    \begin{tabular}{l|l|ccccccc|cc}
    \toprule
    \textbf{Case} & \textbf{Defense} & \textbf{Loss-Series} & \textbf{Avg-Cosine} & \textbf{FedMIA-I} & \textbf{FedMIA-II} & \textbf{FTA-C} & \textbf{FTA-L} & \textbf{SeqMIA} & \textbf{Avg} $\downarrow$ & $\Delta$\textbf{Acc} $\uparrow$ \\
    \midrule
    \multirow{1}{*}{} 
        - & No Defense & $10.52 \pm 1.02$ & $6.44 \pm 2.09$ & $4.89 \pm 2.55$ & $12.36 \pm 2.05$ & $2.22 \pm 1.13$ & $2.20 \pm 0.67$ & $10.62 \pm 4.52$ & 6.59 & (0.46) \\
        \midrule
        
    \multirow{7}{*}{Pair} 
        & GradSparse & $12.55 \pm 1.40$ & $10.87 \pm 4.11$ & $10.03 \pm 3.97$ & $20.13 \pm 10.76$ & $1.49 \pm 1.07$ & $3.52 \pm 0.66$ & $16.48 \pm 6.48$ & 7.87 & -0.01 \\
        & GradNoise  & $13.82 \pm 0.00$ & $8.82 \pm 0.82$ & $3.51 \pm 0.51$ & $13.72 \pm 0.40$ & $2.30 \pm 0.21$ & $3.94 \pm 0.34$ & $14.69 \pm 5.00$ & 8.68 & +0.01 \\
        & DPSGD & $15.62 \pm 0.45$ & $3.08 \pm 0.14$ & $5.77 \pm 0.22$ &  $0.31 \pm 0.02$ & {$1.93 \pm 0.27$} & $0.06 \pm 0.06$ & $4.58 \pm 2.36$ & 4.05 & -0.01 \\
        & Mixup & $12.00 \pm 0.00$ & $4.44 \pm 0.42$ & $13.64 \pm 1.10$ & $16.72 \pm 0.75$ & $2.84 \pm 0.22$ & $0.89 \pm 0.12$ & {$0.15 \pm 0.08$} & 5.81 & -0.01 \\
        & RelaxLoss & $14.47 \pm 0.34$ &  $1.67 \pm 0.06$ & $3.50 \pm 0.12$ & $10.82 \pm 0.21$ & $3.96 \pm 0.17$ & $2.26 \pm 0.38$ & $8.49 \pm 1.99$ & 5.33 & -0.01 \\
        & HAMP & {$9.42 \pm 0.00$} & $18.73 \pm 0.45$ &  $0.03 \pm 0.02$ & $21.12 \pm 0.37$ &  $0.00 \pm 0.00$ & $0.75 \pm 0.09$ & $7.81 \pm 3.34$ & 7.74 & -0.03 \\
        & \textbf{CoFedMID} &  $2.28 \pm 0.43$ & {$2.08 \pm 0.44$} & {$0.97 \pm 0.28$} & {$3.12 \pm 0.80$} & $2.18 \pm 1.25$ & {$0.29 \pm 0.19$} &  $0.11 \pm 0.10$ &  \textbf{1.58} & -0.01 \\
        \midrule

    \multirow{7}{*}{Half} 
        & GradSparse & $13.23 \pm 1.04$ & $9.26 \pm 3.30$ & $13.08 \pm 5.89$ & $19.18 \pm 8.50$ & $8.29 \pm 2.63$ & $2.23 \pm 0.14$ & $13.80 \pm 5.18$ & 11.87 & -0.01 \\
        & GradNoise & {$10.65 \pm 7.92$} & $5.21 \pm 4.82$ & $6.36 \pm 2.30$ & $12.50 \pm 5.32$ & $3.83 \pm 3.08$ & $1.14 \pm 1.08$ & $10.48 \pm 11.00$ & 7.45 & -0.18 \\
        & DPSGD & $16.93 \pm 0.32$ & $3.10 \pm 0.00$ & $1.82 \pm 0.29$ &  $0.20 \pm 0.08$ &  $2.34 \pm 0.65$ &  $0.00 \pm 0.00$ & $15.73 \pm 3.65$ & 5.59 & -0.05 \\
        & Mixup & $13.63 \pm 0.12$ & $4.38 \pm 0.22$ & $8.28 \pm 0.34$ & $14.89 \pm 0.44$ & $4.02 \pm 0.19$ & $0.10 \pm 0.06$ & {$1.20 \pm 0.60$} & 4.75 & -0.06 \\
        & RelaxLoss & $16.43 \pm 0.07$ &  $0.28 \pm 0.06$ & $3.52 \pm 0.48$ & $12.48 \pm 1.25$ & {$2.94 \pm 0.46$} & $1.82 \pm 0.21$ & $8.55 \pm 1.92$ & 3.98 & -0.00 \\
        & HAMP & $11.18 \pm 0.00$ & $13.58 \pm 0.25$ &  $0.00 \pm 0.00$ & $15.00 \pm 0.38$ & $6.33 \pm 0.44$ & $0.20 \pm 0.08$ & $13.15 \pm 5.07$ & 5.46 & -0.10 \\
        & \textbf{CoFedMID} &  $3.34 \pm 0.72$ & {$2.54 \pm 1.09$} & {$0.24 \pm 0.11$} & {$0.48 \pm 0.17$} & $4.18 \pm 1.29$ & {$0.06 \pm 0.05$} &  $0.71 \pm 0.53$ &  \textbf{1.65} & -0.03 \\
        
    \bottomrule
    \end{tabular}
\end{table*}

\noindent\textbf{Attack Algorithms.}
We implement seven trajectory-based attacks from both client-side and server-side perspectives, including Loss-Series~\cite{YSG+18}, Avg-Cosine~\cite{LJL+23}, FedMIA-I/II~\cite{ZGL+25}, FTA-C/L~\cite{CHE+24}, and SeqMIA~\cite{LHL+24}.
These attacks exploit temporal information, i.e., the changes in model updates or outputs across training rounds, to infer the membership status of query samples.
Further details are provided in the Appendix~\ref{app:attack}.

\noindent\textbf{Baseline Defenses.}
We compare our framework with three typical FL defenses, including 
gradient sparsification (GradSparse)~\cite{GOR+18} and noise injection (GradNoise)~\cite{BLH+24, ZGL+25}, DPSGD~\cite{AMZ+24}. Additionally, we include comparisons with three defenses originally designed for centralized settings: Mixup~\cite{ZHC+17, GHL+23}, Relaxloss~\cite{CDY+22}, and HAMP~\cite{ZCK+24}. 
The Appendix~\ref{app:defense} provides a detailed description of these defenses and their configurations.

\noindent\textbf{CoFedMID Configurations.}  
We set $m_{\text{max}}$ to 10 for CIFAR10, 50 for CIFAR100, and 200 for TinyImageNet.
Correspondingly, $m_{\text{min}}$ is set to 2, 20, and 40, respectively (i.e., 20\% of the total number of classes).
Across all three datasets, the compensation module is triggered from round 10 (i.e., $t^{\prime} = 10$), with $M$ set to 10 and $\mu$ to 0.005~\cite{ZCK+24}.
For the perturbation module, the perturbation strength and ratio are set to 0.1 and 0.2 for CIFAR100, with the ratio increased to 0.3 for TinyImageNet, and to 0.01 and 0.01 for CIFAR10, respectively.

\noindent\textbf{Evaluation Metrics.}
Following~\cite{CNC+22,WRL+24,ZCK+24}, we adopt two common MIA evaluation metrics: AUC, the area under the ROC curve (0.5 indicates random guessing), and TPR (True Positive Rate) at a low FPR (False Positive Rate), which measures the ability of an attack to correctly identify members (lower is better). In our experiments, we use TF01 to denote TPR(\%)@FPR=0.1\%.

All experiments were conducted on an Ubuntu system with NVIDIA RTX 3090 GPUs, and the average results are reported over five runs with different random seeds.

\subsection{Defense Performance}
\label{subsec:def}
This subsection mainly evaluates the defense performance of CoFedMID in two representative coalition settings: 
(1) \textit{Pair}: a small coalition of two clients, and (2) \textit{Half}: a large coalition comprising half of all clients.
We then assess its effectiveness in a variety of settings, including non-IID data distributions, different numbers of clients, and a uniform defense scenario.

\noindent \textbf{Performance in the Partial Defense Scenario.}
We begin by analyzing the minimal coalition required for collaborative defense in the pair-client setting.
We present a comparison of all defense methods on CIFAR100 with ResNet18 using AUC and TF01 metrics in Tables~\ref{tab:main_res18_c100} and \ref{tab:main_res18_c100_TF01}.
Additional results for other datasets and models are provided in Appendix~\ref{app:results}.
Overall, CoFedMID consistently achieves superior or comparable defense performance across all attacks, yielding the lowest average values on both AUC and TF01 metrics.
In contrast, baseline methods may be effective against certain attacks but perform significantly worse against others.
For example, although HAMP attains the second-best defense overall, its performance is notably inconsistent, exhibiting substantially poorer results than CoFedMID on Avg-Cosine and FedMIA-I, with AUC values approximately 0.15 and 0.25 higher, respectively. 
Furthermore, we provide a micro-level MIA analysis to evaluate our framework’s uniform defense, examining its performance at the granularity of individual classes and clients in Appendix~\ref{app:microays}.

We then analyze defense performance under a larger defender coalition, i.e., the Half case.
A key observation is that most baseline methods experience significant accuracy degradation as the number of defended clients increases. For example, baselines with AUCs below 0.60 show accuracy drops of 0.05–0.10, substantially greater than those in the Pair case.
In contrast, CoFedMID maintains defense performance comparable to the Pair setting, while incurring only a 0.03 reduction in accuracy.
This underscores the scalability and effectiveness of our framework in different partial scenarios.

\begin{table}[htbp]
    \setlength{\abovecaptionskip}{0pt}  
    \setlength{\belowcaptionskip}{0pt}  
    \centering
    \footnotesize
    \caption{Comparison under the uniform defense scenario on CIFAR100 (AUC/TF01).}
    \label{tab:uniform}
    \begin{tabular}{l|llll}
        \toprule
        MIA\&Acc     & DPSGD &  Mixup & HAMP & CoFedMID \\
        \midrule
        FedMIA-II    & 0.53/1.44 & 0.58/7.41   & 0.64/3.43 & 0.54/0.51 \\
        SeqMIA       & 0.98/46.79 & 0.98/25.46  & 0.99/59.7 & 0.69/0.56 \\
        $\Delta$Acc  & -0.16 & -0.21        & -0.14     & -0.04 \\ 
        \bottomrule
    \end{tabular}
\end{table}

\noindent \textbf{Performance in the Uniform Defense Scenario.}
Although our framework focuses on the partial defense scenario, it can be seamlessly extended to the uniform defense setting where all clients participate. 
To evaluate this, we compare the performance of CoFedMID with three representative baselines selected for their defense capabilities. In addition, we focus on two advanced attacks, FedMIA-II and SeqMIA, due to their high attack strength and their reliance on distinct forms of temporal information.
As shown in Table~\ref{tab:uniform}, our framework achieves the lowest attack performance on both metrics, at the cost of only a slight reduction in utility.

\begin{figure}[t]
    \setlength{\subfigcapskip}{0pt} 
    \setlength{\subfigtopskip}{0pt} 
    \setlength{\subfigbottomskip}{0pt} 
	\centering
	\subfigure{
		\centering
		\includegraphics[width=0.20\textwidth]{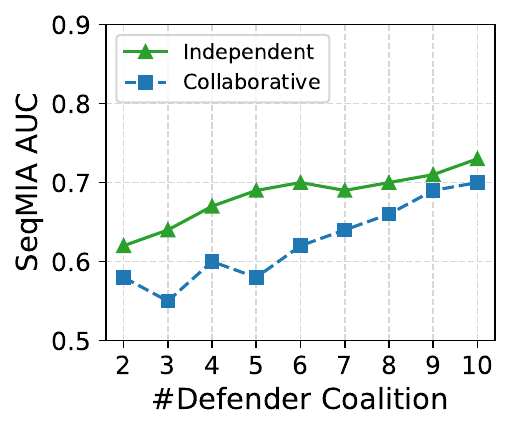}
		\label{fig:auc}
	}
	\subfigure{
		\centering
		\includegraphics[width=0.20\textwidth]{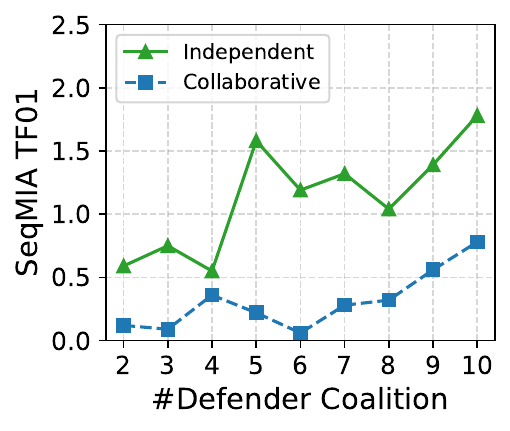}
		\label{fig:tf01}
	}
	\caption{Performance of various coalitions in both independent and collaborative modes on CIFAR100.}
	\label{fig:coll-indepent}
\end{figure}
\noindent \textbf{Performance in Different Defense Modes.}
To demonstrate the advantages of our proposed collaborative defense mode, we conduct experiments across a comprehensive range of defender coalitions—from pairs of clients up to full participation (i.e., 2 to 10 clients). The defense performance of both independent and collaborative modes under different coalition sizes is shown in Figure~\ref{fig:coll-indepent}.
Overall, the collaborative mode consistently improves the performance of our framework on both attack metrics, with TF01 exhibiting notably lower values than in the independent setting.

\noindent \textbf{Performance across Varying Total Clients.}
The preceding experiments were conducted on an FL system with 10 clients. Next, we assess the defense performance in FL systems with varying numbers of clients, ranging from 5 to 30, under both the Pair and Half cases.
As shown in Figure~\ref{fig:fl-system}, CoFedMID benefits from larger FL systems in the Pair case, achieving effective defense with lower utility degradation as more clients train. 
Furthermore, in the Half setting, increasing the number of clients does not have a significant impact on defense effectiveness or utility loss, resulting in results comparable to those in the 10-client scenario.
Thus, these results demonstrate the scalability of our framework in FL systems of varying sizes.

\begin{table}[t]
    \setlength{\abovecaptionskip}{0pt}  
    \setlength{\belowcaptionskip}{0pt}  
    \centering
    \footnotesize
    \caption{Performance under non-IID settings on CIFAR100 (AUC/TF01).}
    \label{tab:non_iid_c100}
    \begin{tabular}{lcccc}
    \toprule
    \multirow{2}{*}{\(\beta\)}  & \multicolumn{2}{c}{FedMIA-II} & \multicolumn{2}{c}{SeqMIA} \\
    \cmidrule(lr){2-3} \cmidrule(lr){4-5}
    ~& No Defense & CoFedMID & No Defense & CoFedMID \\
    \midrule
    $\infty$  & 0.83/12.64 & 0.56/3.04 & 0.92/11.80 & 0.50/0.00 \\
    10.0      & 0.85/26.51 & 0.58/3.53 & 0.97/32.93 & 0.54/0.03 \\
    1.0       & 0.91/58.57& 0.64/2.30 & 0.98/62.01 & 0.52/0.40 \\
    0.5       & 0.96/76.01 & 0.66/0.21 & 0.99/63.77 & 0.58/0.46 \\
    \bottomrule
    \end{tabular}
\end{table}

\begin{figure}[t]
    \setlength{\subfigcapskip}{0pt} 
    \setlength{\subfigtopskip}{0pt} 
    \setlength{\subfigbottomskip}{0pt} 
	\centering
	\subfigure{
		\centering
		\includegraphics[width=0.23\textwidth]{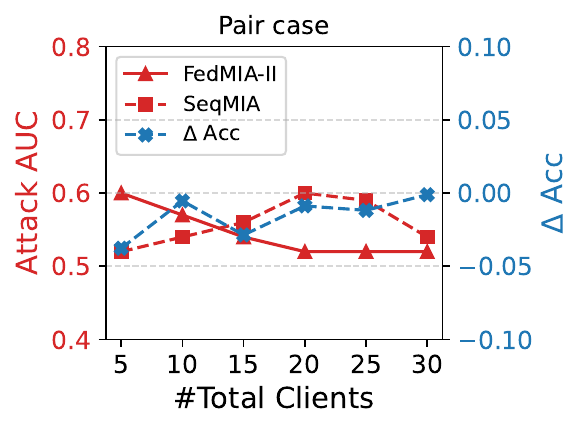}
	}\hspace{-3mm}
	\subfigure{
		\centering
		\includegraphics[width=0.23\textwidth]{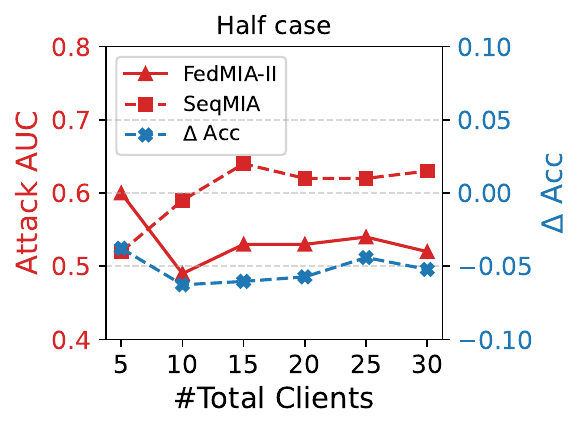}
	}
	\caption{Performance of various total clients on CIFAR100.}
	\label{fig:fl-system}
\end{figure}

\noindent \textbf{Performance in Non-IID settings.} 
We finally evaluate defense performance under different non-IID settings.
To simulate these scenarios, we follow~\cite{HTH+19} and sample client data distributions from a Dirichlet distribution with parameter $\beta \in \{\infty, 10, 1.0, 0.5\}$, where smaller $\beta$ values indicate more skewed distributions and $\infty$ represents the IID setting.
We evaluate CoFedMID under various non-IID settings in the Pair case on CIFAR100, and present the results in Table~\ref{tab:non_iid_c100}.
Our findings show that CoFedMID consistently reduces attack performance across varying degrees of data heterogeneity.
While its effectiveness diminishes slightly under highly skewed distributions (e.g., $\beta = 0.5$) as attacks benefit from increased skewness, it still yields substantially lower AUCs and TF01s than the undefended case.
We present additional results for CIFAR10 and a broader range of attacks in Tables~\ref{tab:noniid-c100} and~\ref{tab:noniid-c10} in the Appendix.

\subsection{Detailed Analysis of CoFedMID}
\label{subsec:details}
In this subsection, we present a detailed analysis of the three CoFedMID modules under different configurations on CIFAR100 and CIFAR10 datasets, with half of the clients forming the defender coalition.

\begin{table}[htbp]
    \setlength{\abovecaptionskip}{0pt}  
    \setlength{\belowcaptionskip}{0pt}  
    \centering
    \footnotesize
    \caption{Impact of each module on defense performance (AUC/TF01) and model utility.   
    \ding{172}: class-guided partition module; \ding{173}: utility-aware compensation module; \ding{174}: aggregation-neural perturbation module.}
    \setlength{\tabcolsep}{3pt} 
    \label{tab:ab_module}
    \begin{tabular}{l|l|llll}
        \toprule
        Dataset & MIA\&Acc & No defense & \multicolumn{1}{c}{\ding{172}} & \multicolumn{1}{c}{\ding{172}+\ding{173}} & \ding{172}+\ding{173}+\ding{174}  \\
        \midrule
        \multirow{3}{*}{CIFAR100}   
        & FedMIA-II & 0.82/12.36 & 0.51/0.14 & 0.52/0.23 & 0.51/0.48 \\
        & SeqMIA    & 0.89/10.62 & 0.59/0.57 & 0.64/0.73 & 0.59/0.71 \\ %
        & $\Delta$Acc  &  \multicolumn{1}{c}{-} & -0.04    & -0.02   & -0.02    \\
        \midrule
        \multirow{3}{*}{CIFAR10}    
        & FedMIA-II & 0.62/1.17  & 0.53/0.00 & 0.55/0.42 & 0.54/0.77 \\
        & SeqMIA    & 0.88/7.34  & 0.54/0.16 & 0.64/0.96 & 0.60/1.74 \\
        & $\Delta$Acc  &  \multicolumn{1}{c}{-}   & -0.07    & -0.03    & -0.03    \\
        \bottomrule
    \end{tabular}
\end{table}
\noindent\textbf{Impact of Each Module.}
We first evaluate the individual contributions of the three CoFedMID modules to both defense performance and FL utility, with results for two representative attacks shown in Table~\ref{tab:ab_module}.
The results for the class-guided partition module (\ding{172}) show that reducing the exposure of local training samples effectively decreases membership privacy risk compared to undefended FL, but this comes at the cost of noticeable utility loss.
As a utility compensation, the module \ding{173} effectively mitigates this loss, particularly for CIFAR10. This difference arises from the data complexity: CIFAR10 holds a simpler data distribution, making the model more sensitive to additional training samples.
Moreover, while combining \ding{172} and \ding{173} is effective against FedMIA-II, incorporating the aggregation-level perturbation module (\ding{174}) further mitigates SeqMIA that exclusively targets local models.
Overall, these three modules contribute in complementary ways to balancing privacy protection against MIAs and model utility.

\begin{table}[t]
    \setlength{\abovecaptionskip}{0pt}  
    \setlength{\belowcaptionskip}{0pt}  
    \centering
    \footnotesize
    \caption{Impact of bounded class assignment (AUC/TF01). } 
    \label{tab:randomcls}
    \begin{tabular}{lcc|cc}
        \toprule
        \multirow{2}{*}{MIA\&Acc} & \multicolumn{2}{c|}{{CIFAR100}} & \multicolumn{2}{c}{{CIFAR10}} \\
        \cmidrule(lr){2-3} \cmidrule(lr){4-5}
                         & {Random} & {Bounded} & {Random} & {Bounded} \\
        \midrule
        FedMIA-II   & 0.53/0.19 & 0.51/0.14 & 0.52/0.02 & 0.53/0.00\\
        SeqMIA      & 0.66/1.31 & 0.59/0.57 & 0.53/0.18 & 0.54/0.06\\
        $\Delta$Acc    & -0.06 &  -0.04 & -0.11 & -0.07 \\ 
        \bottomrule
    \end{tabular}
\end{table} 

\begin{table}[t]
    \setlength{\abovecaptionskip}{0pt}  
    \setlength{\belowcaptionskip}{0pt}  
    \centering
    \footnotesize
    \caption{Impact of decay-based class assignment (AUC). }
    \setlength{\tabcolsep}{1.5pt}
    \label{tab:staticcls}
    \begin{tabular}{lcccc|cccc}
    \toprule
    \multirow{2}{*}{MIA\&Acc} & \multicolumn{4}{c|}{CIFAR100} & \multicolumn{4}{c}{CIFAR10} \\
    \cmidrule(lr){2-5} \cmidrule(lr){6-9}
    & $\lambda_{min}$ & $\frac{\lambda_{min}+\lambda_{miax}}{2}$ & $\lambda_{max}$ & Decay & $\lambda_{min}$ & $\frac{\lambda_{min}+\lambda_{miax}}{2}$ & $\lambda_{max}$ & Decay \\ 
    \midrule
    FedMIA-II  & 0.51 & 0.54 & 0.56 & 0.49 & 0.48 & 0.50 & 0.54 & 0.54 \\
    SeqMIA & 0.62 & 0.67 & 0.73 & 0.59 & 0.54 & 0.66 & 0.68 & 0.54 \\
   $\Delta$Acc & -0.06 & -0.01 & -0.01 & -0.03 & -0.06 & -0.05 & -0.03 & -0.03 \\
    \bottomrule
    \end{tabular}
\end{table}

Next, we conduct an ablation study on various configurations within each module and analyze the effectiveness of their respective sub-modules.

\noindent\textbf{Analysis of Label-guided Partition Module.}
Recall that this module is designed to assign class subsets to clients with minimal overlap, following a bounded, decay-based strategy. To assess its effectiveness, we implement it in isolation and evaluate the impact of different configurations.

\textit{1. Random vs. Bounded Class Assignment.} 
We compare our bounded assignment strategy with random sampling, where each client is assigned a class subset selected uniformly at random.
As shown in Table~\ref{tab:randomcls}, the bounded strategy achieves comparable defense against MIAs while reducing utility loss, indicating that controlling inter-client class overlap preserves the overall class distribution and helps maintain FL performance.

\textit{2. Constant vs. Decay-based Class Assignment.}
We evaluate our decay-based strategy against a constant approach, where the number of assigned classes remains fixed throughout training. 
Table~\ref{tab:staticcls} shows that fixed class assignments suffer from a dilemma: assigning fewer classes generally improves defense but causes greater utility degradation.
In contrast, our strategy achieves defense performance similar to that of $\lambda_{\min}$ and utility close to that of $\lambda_{\max}$.

Additionally, we compare linear and non-linear decay strategies and find that non-linear approaches, such as the exponential function, can further reduce the attack performance, as detailed in Appendix~\ref{app:ablation}.

\begin{table}[t]
    \setlength{\abovecaptionskip}{0pt}  
    \setlength{\belowcaptionskip}{0pt}  
    \centering
    \footnotesize
    \caption{Comparison of different recycling strategies (AUC). }
    \setlength{\tabcolsep}{4pt}
    \label{tab:sampling}
    \begin{tabular}{lccc|ccc}
    \toprule
    \multirow{2}{*}{MIA\&Acc} & \multicolumn{3}{c|}{CIFAR100} & \multicolumn{3}{c}{CIFAR10} \\
    \cmidrule(lr){2-4} \cmidrule(lr){5-7}
    & Rand & Seq & MAB & Rand & Seq & MAB  \\
    \midrule
    FedMIA-II       & 0.52   & 0.52 & 0.52  & 0.55 & 0.56 & 0.55  \\
    SeqMIA          & 0.64   & 0.65 & 0.63  & 0.66 & 0.67 & 0.62 \\
   $\Delta$Acc & -0.03   & -0.03 & -0.03  & -0.08 & -0.09 & -0.04 \\
    \bottomrule
    \end{tabular}
\end{table}

\begin{table}[t]
    \setlength{\abovecaptionskip}{0pt}  
    \setlength{\belowcaptionskip}{0pt}  
    \centering
    \footnotesize
    \caption{Impact of confidence regularization (AUC).}
    \label{tab:ab_hamp}
    \begin{tabular}{lcc|cc}
    \toprule
    \multirow{2}{*}{MIA} & \multicolumn{2}{c|}{CIFAR100} & \multicolumn{2}{c}{CIFAR10} \\
    \cmidrule(lr){2-3} \cmidrule(lr){4-5}
                    & w/o CR & w/CR & w/o CR & w/CR  \\
    \midrule
    FedMIA-II       & 0.55   & 0.52 & 0.56  & 0.52   \\ 
    SeqMIA          & 0.80   & 0.59 & 0.73  & 0.61  \\ 
    \bottomrule
    \end{tabular}
\end{table}

\begin{figure}[t]
    \setlength{\subfigcapskip}{0pt} 
    \setlength{\subfigtopskip}{0pt} 
    \setlength{\subfigbottomskip}{0pt} 
	\centering
	\subfigure{
		\centering
		\includegraphics[width=0.23\textwidth]{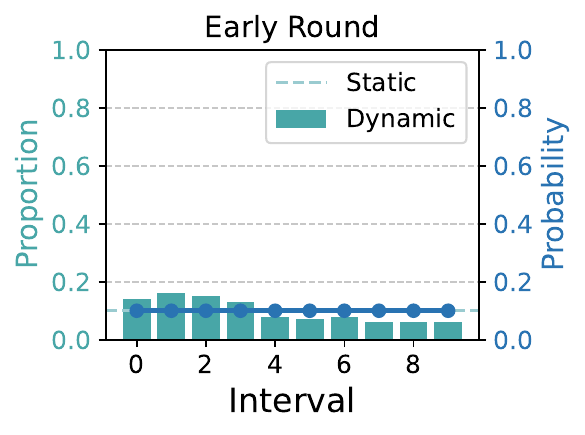}
	}\hspace{-3mm}
	\subfigure{
		\centering
		\includegraphics[width=0.23\textwidth]{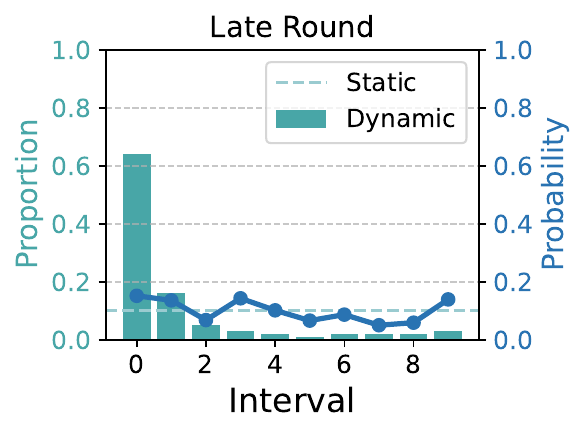}
	}\hspace{-3mm}
	\caption{Proportions and probabilities of sample intervals in early and late training rounds.}
	\label{fig:mabpro}
\end{figure}
\noindent\textbf{Analysis of Utility-aware Compensation Module.}
The utility-aware compensation module reduces the utility loss from the class-guided partition module by selectively reintroducing a small subset of high-contribution samples. 
We evaluate its effectiveness against two heuristic baselines, examine the role of confidence regularization on these samples, and provide additional results for varying recycling ratios in Appendix~\ref{app:ablation}.

\textit{1. Comparison of Recycling Strategies.}
We compare our MAB-based sample recycling strategy (\textit{MAB}) with two heuristic approaches. The random strategy (\textit{Rand}) uniformly selects a fixed number of samples from the remaining local dataset, while the sequential strategy (\textit{Seq}) reintroduces samples in order of their loss values, from low to high. Both baselines recycle a fixed ratio (i.e., 10\%) of samples from local data in each round. 

We present the comparison results of the three approaches in Table~\ref{tab:sampling}. While CIFAR100 exhibits little sensitivity to the choice of recycling strategy due to its diverse data distribution, our method achieves both lower attack performance on both datasets and reduced utility loss on CIFAR10.
To further demonstrate the superiority of the MAB-based recycling strategy, we visualize the sampling scale over training in Figure~\ref{fig:mabpro}, showing how it adaptively selects contributive samples in both early and later rounds. 
Compared to random and sequential methods (i.e., Static) that employ a fixed sampling size, our approach (i.e., Dynamic) consistently recycles fewer high-contribution samples, thereby reducing unnecessary exposure of local data and strengthening membership privacy protection.

\textit{2. Impact of Confidence Regularization.}
We conduct an ablation study to evaluate the effectiveness of the confidence regularization (CR) approach, which is designed to prevent membership privacy leakage from recycled samples.
As shown in Table~\ref{tab:ab_hamp}, incorporating CR substantially reduces MIA success on both datasets, with SeqMIA showing roughly an order-of-magnitude drop, demonstrating its effectiveness in mitigating membership privacy risks.

\noindent\textbf{Analysis of Aggregation-neutral Perturbation Module.}
We have shown the effectiveness of the aggregation-level perturbation module in defending against MIAs targeting local models (see Table~\ref{tab:ab_module}). We further examine the impact of its hyperparameters, namely perturbation strength and ratio. 
As illustrated in Figure~\ref{fig:perturb} for CIFAR100, the perturbation strength has a notable impact on defense performance, with the AUC decreasing by approximately 0.05 as the strength increases from 0.05 to 0.10. 
In the case of the perturbation ratio, its impact on defense effectiveness is relatively minor. For instance, the AUC changes by about 0.03 when the ratio increases from 0.10 to 0.25. 
As a practical guideline, it is preferable to focus on an appropriate perturbation strength and avoid an excessively small perturbation ratio.

\subsection{Adaptive Attacks}
\label{subsec:adap}
An effective defense should demonstrate robustness against adaptive adversaries, who possess complete knowledge of the defense and can modify their attack strategies in response, especially when the existence of the coordinator or the coalition may be exposed. 
To evaluate the robustness of CoFedMID under this threat model, we design an adaptive attack tailored to challenge our defense mechanism.
We assume the attacker can identify clients within the defender coalition, using auxiliary information such as device IPs, public institutional affiliations, or other metadata that reveal client identities or associations. 
Rather than performing MIAs on the local or global models of individual clients, the attacker directs its efforts toward \textit{the aggregated model of the defender coalition}, considering this collective model as the attack target.

To evaluate adaptive attacks, we adopt four different attack methods to this threat model, replacing their respective target models with the aggregated model for each attack.
We conduct experiments on both CIFAR10 and CIFAR100 datasets, with the resulting defense performance summarized in Table~\ref{tab:adaptive}.
We find that the aggregated model results in a noticeable reduction in attack performance compared to non-adaptive attacks in undefended settings. With respect to our framework, CoFedMID, it demonstrates strong defense capabilities against adaptive attacks.
For the first three MIAs, CoFedMID maintains defense performance comparable to that observed under the non-adaptive attacks. Although SeqMIA exhibits an increased AUC under adaptive attacks as the aggregation-level perturbation module is bypassed, its TF01 remains below 2\%, indicating a poor ability to distinguish between member and non-member samples. Overall, these results demonstrate that our defense framework is robust against adaptive attacks.
  
\begin{table}[t]
    \setlength{\abovecaptionskip}{0pt}
    \setlength{\belowcaptionskip}{0pt}
    \centering
    \footnotesize
    \caption{Defense performance against adaptive attacks.}
    \setlength{\tabcolsep}{4pt}
    \label{tab:adaptive}
    \begin{tabular}{l|cccc}
    \toprule
    Case & Loss-Series & FedMIA-II & FTA-C & SeqMIA \\
    \midrule
    \multicolumn{5}{c}{CIFAR100 Dataset} \\ \midrule
    \addlinespace[1pt]
    No defense         & 0.66/13.3  & 0.68/1.48 & 0.60/3.26 & 0.70/0.00 \\
    CoFedMID (Pair)    & 0.46/2.78  & 0.56/5.47 & 0.49/1.14 & 0.52/0.00 \\
    CoFedMID (Half)    & 0.50/3.08  & 0.55/5.36 & 0.45/1.80 & 0.66/0.00 \\
    \midrule
    \addlinespace[3pt]
    \multicolumn{5}{c}{CIFAR10 Dataset} \\ \midrule
    \addlinespace[1pt]
    No defense         & 0.62/11.1  & 0.55/0.29 & 0.50/1.70 & 0.80/0.03 \\
    CoFedMID (Pair)    & 0.54/3.04  & 0.52/0.96 & 0.53/0.80 & 0.63/1.98 \\
    CoFedMID (Half)    & 0.54/2.20  & 0.55/0.38 & 0.50/0.16 & 0.64/0.32 \\
    \bottomrule
    \end{tabular}
\end{table}

\begin{figure}[t]
    \setlength{\subfigcapskip}{0pt} 
    \setlength{\subfigtopskip}{0pt} 
    \setlength{\subfigbottomskip}{0pt} 
    \centering
    \includegraphics[width=0.46\textwidth]{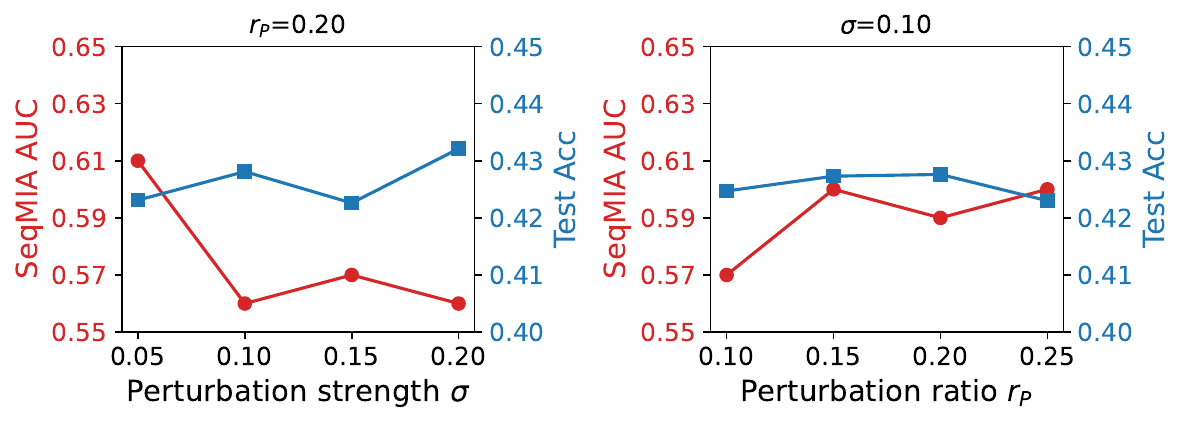}
    \caption{Impact of perturbation strength and ratio.} 
    \label{fig:perturb}
\end{figure}

\subsection{Overhead of CoFedMID}
We assess the overhead of our defense framework for FL, including communication and computational costs.

\noindent\textbf{Communication Cost.}
The communication cost is a critical concern in FL, particularly for limited network bandwidth.
Our framework incurs extra communication overhead before training, as a leader client distributes class assignments and perturbation configurations to the coalition.
Specifically, each client receives up to $N$ class labels along with two scalar values for perturbation configurations per training round. Given $T$ rounds, this totals $(N+2) \times T$ items. Assuming each item occupies 1 byte, the total communication overhead is approximately 20~KB for our default settings (e.g., $T=100$ and $N=200$ for TinyImageNet).
Compared with model updates exchanged in FL, which are typically tens of megabytes per round, CoFedMID adds only minimal communication overhead while offering effective defense against MIAs.

\noindent\textbf{Computational Cost.} 
Effective defenses should not only provide strong protection but also remain efficient to deploy in practice.
We evaluate the computational cost of our defense by measuring the client-side local training time across the entire FL process, as summarized in Table~\ref{tab:runtime}.
As expected, the class-guided partition module (\ding{172}) reduces runtime compared to undefended FL by using fewer training samples, the perturbation module (\ding{174}) adds negligible overhead due to its efficient design, while the compensation module (\ding{173}) incurs additional computational cost.
Overall, the results show that CoFedMID incurs only about 1.4× the runtime of undefended FL on average, which remains acceptable for clients.
\begin{table}[t]
    \setlength{\abovecaptionskip}{0pt}  
    \setlength{\belowcaptionskip}{0pt}  
    \centering
    \footnotesize
    \caption{Runtime of local training over (seconds).}
    \setlength{\tabcolsep}{7pt}
    \label{tab:runtime}
    \begin{tabular}{l|lccc}
        \toprule
        Dataset & \multicolumn{1}{c}{No Defense} &  \ding{172} & \ding{172}+\ding{173} & \ding{172}+\ding{173}+\ding{174} \\
        \midrule
        CIFAR100   & 1x (225.3) & 0.78x & 1.53x & 1.52x \\
        CIFAR10    & 1x (229.8) & 0.87x & 1.66x & 1.69x \\
        TinyImageNet  & 1x (1003.2) & 0.57x & 0.98x & 1.03x \\
        \midrule
        Average & \multicolumn{1}{c}{-} & 0.73x & 1.39x & 1.41x \\
        \bottomrule
    \end{tabular}
\end{table}

\section{Discussion}
\label{sec:dis}

\textbf{The Reason Why CoFedMID Works.}
The above experiments demonstrate that our proposed framework is effective in defending against MIAs in FL both in uniform and partial scenarios.
To further understand its effectiveness, we now analyze the underlying mechanisms of CoFedMID.
We first show the average loss trajectories of a client's local model within the defender coalition in Figure~\ref{fig:our-loss}.
In conjunction with the loss trajectory for undefended FL in Figure~\ref{fig:noneloss}, our framework elevates the overall sample loss and significantly reduces the loss gap between member samples and both types of non-member samples, thereby substantially lowering the attack performance in practice.

Moreover, our framework is designed to mitigate MIAs in FL by reducing the exposure of training samples.
To investigate this, we randomly select a subset of training samples from CIFAR100 and record their usage in the local training process.
As shown in Figure~\ref{fig:our-data}, most samples are trained only about half as many times as in the undefended setting (Used). 
In addition, each sample is protected by the additional defense mechanism over several rounds (Used (w/ CR)). This not only prevents a significant discrepancy from the training process of undefended clients but also narrows the distinguishability between member and non-member IFL samples.
Furthermore, it indicates that our framework employs distinct strategies for different training samples. For instance, it allocates stronger protection to high-loss samples while providing less additional safeguarding to low-loss samples. This differentiated way enhances the overall effectiveness of our defense.

\begin{figure}[t]
    \setlength{\subfigcapskip}{0pt} 
    \setlength{\subfigtopskip}{0pt} 
    \setlength{\subfigbottomskip}{0pt} 
	\centering
	\subfigure[Loss trajectory]{
		\centering
		\includegraphics[width=0.215\textwidth]{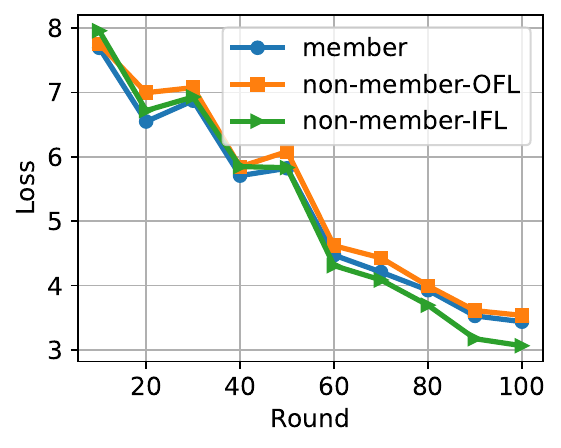}
        \label{fig:our-loss}
	}
	\subfigure[Data usage and protection]{
		\centering
		\includegraphics[width=0.23\textwidth]{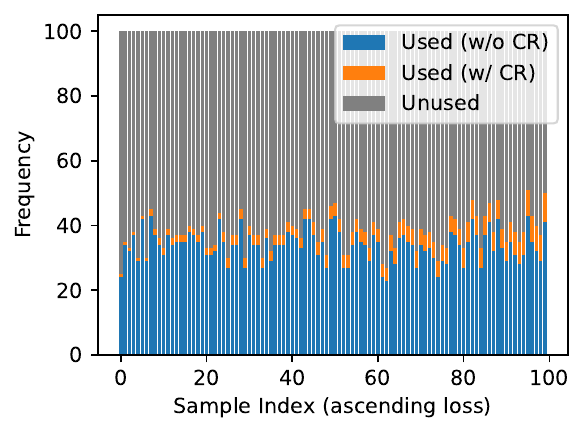}
        \label{fig:our-data}
	}
	\caption{Analysis of CoFedMID’s effectiveness.}
\end{figure}

\noindent\textbf{Limitations and Future Work.}
Despite the demonstrated superiority, CoFedMID has two limitations that warrant further investigation:

\textit{Privacy-Utility Trade-offs.} 
Our framework cannot fully avoid the inherent trade-off between privacy and utility, particularly in large coalitions. As more clients participate, a greater portion of training data must be concealed, a more pronounced impact on model performance. 
Future research could mitigate this trade-off by adaptively determining the initial class subset size in the assignment strategy based on the coalition size.

\textit{Poisoning MIAs.}
Advanced MIAs combined with poisoning attacks represent an emerging research direction~\cite{MWL+25}. 
In this work, we focus on MIAs initiated by honest-but-curious clients or servers, and develop our defense framework under the assumption of an honest coordinator and coalition members.
However, the distributed nature of FL makes it susceptible to poisoning attacks. 
Since defending against poisoning attacks is an orthogonal problem to membership inference defense, we leave mitigating these complex attack scenarios to future work.

\textit{Task Extension.}
Our framework requires explicit label information to construct the class-guided module, which makes it naturally effective for classification tasks. Consequently, it does not directly apply to label-free settings such as unsupervised learning. Exploring extensions such as pseudo-labeling~\cite{LDO+13} is a promising direction for future work. 
Additionally, while the majority of MIA studies have concentrated on classification tasks—particularly those involving image data—future work could broaden the scope to include other data modalities, such as text and medical records.


\section{Related Works}
\label{sec:rel}

\subsection{Membership Inference Attacks}
MIAs are first introduced by Shokri et al.~\cite{SRS+17} in the context of centralized learning to determine whether a specific data record belonged to the training set of a target model. 
Since then, numerous approaches have been proposed, substantially enhancing inference performance across diverse model architectures and settings~\cite{CDY+20,CNC+22,WYY+22,WRL+24}.

Building upon these centralized works, research has extended MIAs to federated settings. Nasr et al.~\cite{NMS+19} first proposed MIAs in FL by exploiting per-sample gradients, intermediate features, and loss values during training. This work reveals significant privacy risks in model updates when local data is insufficiently protected. Li et al.~\cite{LJL+23} later developed a gradient-based attack leveraging cosine similarity between per-sample gradients across communication rounds.

Compared to the above update-based MIAs, trajectory-based approaches demonstrate stronger performance.
These attacks capture richer temporal information by exploiting how certain statistics evolve throughout the training process. For instance, an attacker may monitor the trajectory of per-sample loss values, prediction confidence~\cite{GYB+22}, or gradient norms~\cite{ZGL+25} to distinguish member samples from non-members.
FedMIA~\cite{ZGL+25} adapts LiRA~\cite{CNC+22} for FL by using aggregated signals like loss values and gradient norms over multiple rounds to perform membership inference.
Similarly, Chen et al.~\cite{CHE+24} propose a lightweight yet effective method that leverages the rate of change of per-sample loss or prediction confidence over multiple model snapshots, achieving strong attack performance.

\subsection{Membership Inference Defenses}
A variety of defenses against MIAs have been extensively studied in both centralized and federated contexts. 
Centralized defenses include output perturbation to reduce confidence gaps~\cite{SRS+17, JJS+19, ZCK+24}, regularization techniques like data augmentation and adversarial training to mitigate overfitting~\cite{KYD+21, NMS+18}, knowledge distillation to transfer knowledge to protected models~\cite{SVH+21, TXM+22}, and differential privacy to provide formal privacy guarantees~\cite{TSL+19, JBE+19, AMZ+24}.
These methods mainly reduce the difference between member and non-member data by preventing the target model from overfitting.

However, due to the decentralized nature of FL, defensive mechanisms differ substantially, typically relying on perturbation-based approaches.
Partial sharing reduces the attack surface by selectively suppressing or filtering model updates during training, such as gradient compression~\cite{MLS+19, LJL+23} and weight pruning~\cite{SDG+22}, to limit exposure of sensitive information. Nevertheless, this attack-agnostic approach provides limited defense performance~\cite{MLS+19}.
Differential privacy remains a key defense in FL by adding calibrated noise to gradients to obscure individual contributions and reduce membership leakage. The effectiveness of local and central DP against white-box attacks is demonstrated in~\cite{MNJ+22}, often at the cost of model utility~\cite{HHS+21, SAK+22}. The work~\cite{YXF+22} offers effective defense via model perturbation but requires a trusted server.
An orthogonal line of defense employs cryptographic techniques~\cite{BKI+17, AYH+17} to protect membership privacy. For instance, SecAgg~\cite{BKI+17} uses secret sharing and masking to defend against honest-but-curious servers. While these methods secure model updates during transmission, they incur substantial computational and communication overhead. 
Furthermore, they cannot prevent MIAs via the global model, as clients still have access to the decrypted global model~\cite{GYB+22}.
\section{Conclusion}
\label{sec:con}

In this work, we explore a new and realistic scenario in which a group of clients collaborates to form a defender coalition against MIAs in FL. 
Towards this end, we propose CoFedMID, a novel defense framework designed to mitigate membership privacy leakage in such settings. 
Extensive experiments show that our approach consistently surpasses existing methods from both federated and centralized settings. 
In addition to our primary scenario, CoFedMID also outperforms existing defenses in the previously studied uniform and independent scenarios.
We underscore the value of client collaboration for defense and hope that our work will inspire further investigation into advanced mitigation strategies in FL.



\section*{Ethical Considerations}
In this study, we propose an effective defense against membership inference attacks in federated learning and analyze its potential impacts. To ensure ethical research practices, we discuss the following points.

\textbf{Safety Considerations for Experiments.} All experiments were conducted exclusively on publicly available benchmark datasets within a simulated federated learning environment. These datasets contain no personally identifiable or sensitive information, and all clients and servers were instantiated in a controlled virtual setting, without involving any real users, devices, or organizations.

\textbf{Positive Impacts of Proposed Defenses.} Our defense provides protection for clients’ local data against membership inference attacks. By mitigating the risk of privacy leakage to honest-but-curious servers and clients outside the coalition, the defense primarily benefits FL clients, who are the main recipients of its protective effects. In real-world scenarios, industries such as healthcare, finance, and mobile applications that rely on federated learning could see improved privacy for users’ sensitive data if proposed defenses are deployed.

\textbf{Negative Impacts of Proposed Defenses.} Despite these benefits, there are potential negative consequences that must be pointed out: (1) Protected clients may overestimate the defense’s effectiveness if coalition members misbehave and collude. (2) While the defense offers partial protection for sensitive users, it does not cover clients outside the coalition, leaving their data potentially vulnerable to attackers. (3) Attackers may adapt strategies to bypass the defense or exploit insights from it. (4) Behavioral shifts among clients may occur, as organizations and users might over-rely on the protection mechanism, potentially altering the overall risk profile and unintentionally amplifying the effectiveness of attacks.

\section*{Open Science}
To support reproducibility and replicability, we provide the research artifacts of our defense framework, including source code, scripts, and detailed instructions, to facilitate the reproduction of our experimental results. They are available at \href{https://zenodo.org/records/17982729}{Zenodo repository}, which include:

\begin{itemize}
	\item \textbf{Datasets:} We use three publicly available datasets (i.e., CIFAR10, CIFAR100, TinyImageNet) and provide instructions for downloading them.
	\item \textbf{Source Code:} We provide the core implementation of the proposed defense method as well as baseline algorithms.
	\item \textbf{Scripts:} We include two scripts for running experiments on CIFAR100 and ResNet18 to evaluate the defense performance of CoFedMID and the baselines.
	\item \textbf{Evaluation Demo:} We provide a Jupyter notebook for statistical analysis and visualization of the experimental results.
\end{itemize}

All artifacts, along with detailed instructions and full version, are also accessible at \href{https://github.com/BaiLibl/CoFedMID}{Github repository}.

\section*{Acknowledgments}
This work was supported by the National Natural Science Foundation of China (Grant No:  92270123, 62372122, 62502416, and U25A20430),  and the Research Grants Council (Grant No: 15208923, 25207224, and 15207725), Hong Kong SAR, China.

\bibliographystyle{unsrt}
\bibliography{refer.bib}

\newpage
\appendix

\section{Proof Statement}
In this section, we present the proof of the theoretical lower bound on the overlap between fixed-size subsets.

\begin{theorem}
Given a set of $N$ classes and $K$ users (or clients), suppose each user is assigned a subset of $m$ classes. Let 
\[
S_1, S_2, \dots, S_K \subseteq \{1, \dots, N\}
\]
be the subsets assigned to the $K$ users, with $|S_i| = m$ for all $i = 1, \dots, K$. Then the maximum overlap between any pair of users satisfies the lower bound:
\[
\max_{1 \le i < j \le K} |S_i \cap S_j| \ge \left\lceil \frac{m(Km - N)}{N(K - 1)} \right\rceil.
\]
\end{theorem}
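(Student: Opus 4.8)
The plan is to use a double-counting argument on incidences between classes and pairs of users. For each class $c \in \{1,\dots,N\}$, let $d_c$ denote the number of users whose assigned subset contains $c$; that is, $d_c = |\{i : c \in S_i\}|$. Since each user is assigned exactly $m$ classes, summing over all users gives $\sum_{c=1}^{N} d_c = Km$. On the other hand, the quantity $\sum_{1 \le i < j \le K} |S_i \cap S_j|$ counts, for each unordered pair of users, the number of classes they share; swapping the order of summation, this equals $\sum_{c=1}^{N} \binom{d_c}{2}$, since each class $c$ contributes one to the count for every pair of users that both contain it.

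The next step is to lower-bound $\sum_{c=1}^{N} \binom{d_c}{2}$ using convexity. The function $x \mapsto \binom{x}{2} = \tfrac{x(x-1)}{2}$ is convex, so by Jensen's inequality applied to the average $\bar d = \tfrac{1}{N}\sum_c d_c = \tfrac{Km}{N}$, we get
\[
\sum_{c=1}^{N} \binom{d_c}{2} \ge N \binom{\bar d}{2} = N \cdot \frac{\bar d(\bar d - 1)}{2} = \frac{Km(Km - N)}{2N}.
\]
Combining with the identity above, $\sum_{1 \le i < j \le K} |S_i \cap S_j| \ge \tfrac{Km(Km-N)}{2N}$. Since there are exactly $\binom{K}{2} = \tfrac{K(K-1)}{2}$ pairs, the maximum pairwise overlap is at least the average overlap, i.e.
\[
\max_{1 \le i < j \le K} |S_i \cap S_j| \ge \frac{1}{\binom{K}{2}} \sum_{1 \le i < j \le K} |S_i \cap S_j| \ge \frac{Km(Km - N)}{2N} \cdot \frac{2}{K(K-1)} = \frac{m(Km - N)}{N(K-1)}.
\]
Finally, because the left-hand side is an integer, we may round the right-hand side up to obtain the claimed ceiling bound.

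The argument is essentially routine once the double-counting identity $\sum_{i<j} |S_i \cap S_j| = \sum_c \binom{d_c}{2}$ is in place, so the main thing to be careful about is the convexity step: strictly speaking Jensen gives the bound with $\bar d$ possibly non-integer, and one should check that $N\binom{\bar d}{2}$ is still a valid lower bound (it is, since $\binom{x}{2}$ is a genuine convex function on all reals, not just integers, so no integrality of $\bar d$ is needed here — the integrality is only invoked at the very end on the max). A secondary point worth a sentence is the implicit assumption $Km \ge N$, which is exactly the "complete coverage" regime; if $Km < N$ the right-hand side is non-positive and the bound is vacuous, so nothing is lost. I do not anticipate any real obstacle beyond stating these steps cleanly.
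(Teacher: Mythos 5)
Your proof is correct and follows essentially the same route as the paper's: the double-counting identity $\sum_{i<j}|S_i\cap S_j|=\sum_c\binom{d_c}{2}$, a Jensen/convexity lower bound at the average frequency $Km/N$, and an averaging (pigeonhole) step over the $\binom{K}{2}$ pairs followed by taking the ceiling. Your added remarks on the real-valued convexity of $\binom{x}{2}$ and on the vacuity of the bound when $Km<N$ are minor clarifications that the paper omits but do not change the argument.
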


\begin{proof}
Define the total overlap over all pairs of users as
\[
T := \sum_{1 \le i < j \le K} |S_i \cap S_j|.
\]

For each class $c \in \{1, \dots, N\}$, let $f_c$ be the number of users whose subset contains class $c$. Since each user has $m$ classes, the total number of class assignments is
\[
\sum_{c=1}^N f_c = K m.
\]

Each class $c$ contributes $\binom{f_c}{2} = \frac{f_c(f_c - 1)}{2}$ to the total overlap $T$, as it is counted once for every pair of users sharing that class. Therefore,
\[
T = \sum_{c=1}^N \binom{f_c}{2} = \sum_{c=1}^N \frac{f_c (f_c - 1)}{2}.
\]

By Jensen's inequality and convexity of the function $x \mapsto x(x-1)$, the minimum of $T$ occurs when all $f_c$ are as equal as possible. Let the average frequency be
\[
\bar{f} = \frac{1}{N} \sum_{c=1}^N f_c = \frac{K m}{N}.
\]

Thus, 
\[
T \ge N \cdot \frac{\bar{f}(\bar{f} - 1)}{2} = \frac{K m (K m - N)}{2 N}.
\]

There are $\binom{K}{2} = \frac{K(K-1)}{2}$ user pairs in total, so by the pigeonhole principle, at least one pair has overlap at least
\[
\max_{i<j} |S_i \cap S_j| \ge \frac{T}{\binom{K}{2}} \ge \frac{m (K m - N)}{N (K - 1)}.
\]

Since the maximum overlap must be an integer, we take the ceiling:
\[
\max_{1 \le i < j \le K} |S_i \cap S_j| \ge \left\lceil \frac{m (K m - N)}{N (K - 1)} \right\rceil.
\]
\end{proof}

\begin{algorithm}
\caption{Class-guided Partition}
\label{alg:partition}
\KwIn{Number of classes $N$; coalition size $d$; maximum classes per client $m_{\max}$; minimum classes per client $m_{\min}$; total rounds $T$}
\KwOut{Subsets $\{\mathcal{S}_1^t, \dots, \mathcal{S}_d^t\}_{t=1}^T$}

\For{$t \gets 1$ \KwTo $T$}{
    $m^t \leftarrow$ \texttt{Decay}$(m_{\max}, m_{\min}, t)$\;
    Initialize overlap bound $\lambda \leftarrow \lambda_{theo}(m^t)$ \;
    \Repeat{success}{
        Initialize $\mathcal{S}_1^t, \dots, \mathcal{S}_d^t \leftarrow \emptyset$, success $\leftarrow$ \texttt{True}\;
        \For{$i \leftarrow 1$ \KwTo $d$}{
            \uIfNot{$ClassAssign(N, d, m^t, \lambda)$}{
                \tcp{Overlap constraint violated}
                success $\leftarrow$ \texttt{False}; $\lambda \leftarrow \lambda + 1$; \textbf{break}
            }
        }
    }
}
\Return{$\{\mathcal{S}_1^t, \dots, \mathcal{S}_d^t\}_{t=1}^T$}
\end{algorithm}

\begin{algorithm}[h]
\caption{Utility-Aware Compensation}
\label{alg:compensation}
\KwIn{
Local training set $D_k$, validation set $D_{\text{val}}$; \\
Total rounds $T$, number of intervals $M$; \\
Initialization round $t_0$, maximum recycling ratio $r_l$
}
\KwOut{Local model $\theta_k$}
Initialize interval weights $w_j \gets 1$ for all $j = 1, \dots, M$\;
Initialize sample intervals $\{\mathcal{I}_j\}_{j=1}^{M}$ at round $t_0$\;

\For(\tcp*[f]{Recycling begins}){$t \gets t_0 + 1$ \KwTo $T$}{
    Receive global model $\tilde{\theta}$ and update local model $\theta_k$\;
    Obtain $D_k^{\text{ass}}$ from the first module\;
    Sample interval index $j \sim \pi_w$\;
    \tcp{Collect recycled samples from $\mathcal{I}_j$}
    $D_k^{\text{rec}} \gets \{(x_i, y_i) \in D_k \setminus D_k^{\text{ass}} \mid (x_i, y_i) \in \mathcal{I}_j\}$\;
    $D_k^{\text{rec}} \gets \text{RandomSubset}(D_k^{\text{rec}}, \lfloor r_l \cdot |D_k| \rfloor)$\;

    \tcp{Local update}
    Train $\theta_k$ on $D_k^{\text{ass}} \cup D_k^{\text{rec}}$ using the combined loss\;

    \tcp{Reward estimation and EXP3 update}
    Evaluate validation losses of $\tilde{\theta}$ and $\theta_k$ on $D_{\text{val}}$\;
    Compute normalized reward $\tilde{r}_j$\;
    Update weight and sampling probability accordingly\;
}
\Return $\theta_k$
\end{algorithm}

\begin{algorithm}[h]
\caption{Aggregation-Neutral Perturbation}
\label{alg:perturb}
\KwIn{
  Model parameters \(\{\theta_k\}_{k=1}^{d}\) of defender coalition;
  Aggregation weights \(\{w_k\}_{k=1}^{d}\); 
  Perturbation strength \(\sigma\); Perturbation ratio \(r_p\)
}
\KwOut{Perturbed model parameters \(\{\theta_k\}_{k=1}^{d}\)}

\For{\(k = 1\) \KwTo \(d\)}{
Sample \(\delta_k' \sim \mathcal{N}(0, \sigma^2 I)\)\;
}
\For{\(k = 1\) \KwTo \(d\)}{
    Compute projection: \(\delta_k = \delta'_k - \frac{w_k}{\|\boldsymbol{w}\|_2^2} \cdot \sum_{j=1}^{d} w_j \delta'_j\)\;
  Let $\theta_k$ have \(P\) parameters and define \(P^{\prime} = \lfloor r_p \cdot P \rfloor\)\;
  \For{\(i = P - P^{\prime} + 1\) \KwTo \(P\)}{
    Inject perturbation: \(\theta_k^{(i)} \gets \theta_k^{(i)} + \delta_k\)\;
  }
}
\Return \(\{{\theta}_k\}_{k=1}^{d}\)

\end{algorithm}

\section{Algorithms}
\label{sec:app-alg}

In this section, we provide the pseudocode of the three core modules: 
(i) \textbf{Class-guided partition} (Algorithm~\ref{alg:partition}), which limits the exposure of individual datasets by distributing the overall data distribution across clients, each handling only a local subset; (ii) \textbf{Utility-aware compensation} (Algorithm~\ref{alg:compensation}), which mitigates utility loss by selectively increasing exposure in a performance-preserving manner; 
and (iii) \textbf{Aggregation-neutral perturbation} (Algorithm~\ref{alg:perturb}), which enhances defense by enabling defenders to collaboratively inject noises that cancel out during aggregation.
Furthermore, we present the pseudocode of the overall framework that integrates these modules in Algorithm~\ref{alg:overall}.

\begin{algorithm}[h]
\caption{Federated Learning with CoFedMID}
\label{alg:overall}
\KwIn{Number of clients $K$, defender coalition $\mathcal{D}$, total rounds $T$}
\KwOut{Global model $\tilde{\theta}$}
\tcp{Preparation by the coordinator}
Initialize class subsets $\{\mathcal{S}_1^t, \dots, \mathcal{S}_d^t\}_{t=1}^T$\;
Precompute perturbation noises $\{\delta_k\}_{k \in \mathcal{D}}$\;
\tcp{FL process}
\For{$t \gets 1$ \KwTo $T$}{
    Server samples a subset $\mathcal{K}$ from $K$ clients\;

    \tcp{Client-side local update}
    \ForEach{$k \in \mathcal{K}$}{
        Initialize local model $\theta_k \gets \tilde{\theta}$\;

        \eIf{$k \in \mathcal{D}$}{
            \tcp{Coalition client training}
            $D_k^{\text{ass}} \gets \texttt{Partition}(D_k, \mathcal{S}_k^t)$\;
            $D_k^{\text{rec}} \gets \texttt{Compensation}(D_k, D_k^{\text{ass}})$\;
            Train $\theta_k$ on $D_k^{\text{ass}} \cup D_k^{\text{rec}}$\;
            Perturb $\theta_k$ with $\delta_k$\;
        }{
            \tcp{Standard client training}
            Update $\theta_k$ on $D_k$ using $\ell_{ce}$\;
        }
        Send $\theta_k$ to server\;
    }

    \tcp{Server-side model aggregation}
    $\tilde{\theta} \gets \sum_{k \in \mathcal{K}} \frac{|D_k|}{\sum_{j \in \mathcal{K}} |D_j|} \cdot \theta_k$\;
}
\Return{$\tilde{\theta}$}
\end{algorithm}

\section{Dataset Details}
Considering class diversity, we conduct experiments on three widely used benchmark image classification datasets in MIA research: CIFAR10~\cite{KAH+09}, CIFAR100~\cite{KAH+09}, and TinyImageNet~\cite{YLX+15}. 

\noindent  \textit{ {1) CIFAR10}}~\cite{KAH+09} is a widely adopted benchmark in computer vision, consisting of 60,000 color images of size 32×32 pixels, evenly distributed across 10 mutually exclusive classes, such as airplane, automobile, bird, and cat. The dataset contains 50,000 training images and 10,000 test images.

\noindent  \textit{ {2) CIFAR100}}~\cite{KAH+09} shares the same structure and image resolution as CIFAR10, but with a more fine-grained label space comprising 100 distinct classes, grouped into 20 superclasses. Each class contains 600 images, with 500 used for training and 100 for testing. 

\noindent  \textit{ {3) TinyImageNet}}~\cite{YLX+15} is constructed to provide a more computationally feasible alternative to the full ImageNet dataset. It contains 200 classes, each with 500 training images, 50 validation images, and 50 test images, with all images resized to 64×64 pixels. 

We distribute the training set across all clients in either an IID or non-IID manner, ensuring no overlap between clients. The former refers to an even distribution of the training samples among all clients, whereas the latter represents a skewed allocation of the data. Additionally, all clients share the same test dataset.

\section{Detailed Attack Algorithms}
\label{app:attack}

We evaluate seven trajectory-based attacks covering both client-side and server-side settings, including:

\noindent \textit{ {FedMIA}}~\cite{ZGL+25} leverages model updates from non-target clients and employs a one-tailed likelihood-ratio test to exploit the available adversarial information. It first extracts a low-dimensional measurement of a query sample from local models, estimates the OUT distribution for each communication round, and then infers the membership status based on the measurement trajectory.
Depending on the measurement type, FedMIA-I adopts model loss~\cite{YSG+18}, whereas FedMIA-II employs Grad-Cosine similarity~\cite{LJL+23}.

\noindent \textit{ {FTA}}~\cite{CHE+24} is a training-free attack designed to audit membership privacy risks in FL. The method quantifies the slope of changes in model performance metrics, such as prediction confidence and loss, across the communication rounds to differentiate members from non-members. 
We evaluate two variants: FTA-C, which exploits slope trajectories of model confidence, and FTA-L, which exploits those of model loss.

\noindent \textit{ {SeqMIA}}\cite{LHL+24}, a state-of-the-art trajectory-based MIA originally proposed for the centralized setting, is adapted in this work to the federated setting. It infers membership status by aggregating a sequence of informative training indicators, including sample loss\cite{YSG+18}, maximum output posterior~\cite{SAZ+18, SLM+21}, output posterior dispersion~\cite{SAZ+18}, prediction entropy~\cite{SAZ+18, SLM+21}, and modified prediction entropy~\cite{SLM+21}, as inputs to a recurrent neural network (RNN)-based attack model. In our setting, SeqMIA is applied to perform MIAs against local models.

In addition to the above, we evaluate two further MIAs: \textit{Loss-Series}, which tracks the trajectory of average loss~\cite{YSG+18}, and \textit{Avg-Cosine}, which measures the average change in gradient cosine similarity~\cite{LJL+23}.

For these attacks, we extract attack features and corresponding model snapshots every 10 communication rounds to balance attack overhead and effectiveness. 
Each attack is evaluated using 5,000 member samples and 10,000 non-member samples, which consist of 1,000 non-member-OFL samples and 9,000 non-member-IFL samples, with the latter equally contributed by all other clients.

\section{Detailed Defense Baselines}
\label{app:defense}

We evaluate our framework against six representative defense approaches, detailed as follows:

\noindent \textit{ {GradSparse}}: 
A common defense strategy in FL to mitigate privacy leakage involves limiting the exposure of model updates. A representative method is gradient sparsification~\cite{GOR+18}, which zeroes out gradients with smaller magnitudes, preserving only the top components ranked by absolute value.
In our experiments, we evaluate gradient sparsification with sparsity rates of $\{0.1, 0.2\}$ and report the best attack performance observed.

\noindent \textit{ {GradNoise}}:
Gradient noise~\cite{BLH+24, ZGL+25} mitigates privacy risks by adding random Gaussian noise to gradients before transmission to the central server, obscuring the influence of individual data points and hindering inference attacks.
Formally, the perturbed gradient is defined as \(\tilde{g} = g + \mathcal{N}(0, \sigma^2 I)\), where \(g\) is the original gradient and \(\mathcal{N}(0, \sigma^2 I)\) is Gaussian noise with standard deviation \(\sigma\). Higher \(\sigma\) improves privacy but may degrade performance. We experiment with \(\sigma \in \{0.01, 0.1\}\) to assess defense effectiveness across different privacy-utility trade-offs.

\noindent \textit{ {DPSGD}}:
DPSGD~\cite{AMZ+24, DSB+22, STS+23} is a widely used defense against membership inference attacks, offering formal privacy guarantees. Here, we focus on client-level DPSGD, where each client clips its local gradients to a predefined threshold and adds Gaussian noise for privacy protection.
Following~\cite{AMZ+24}, we implement a high-utility DPSGD baseline by replacing batch normalization with group normalization and incorporating augmentation multiplicity~\cite{DSB+23} to improve utility.
To explore the privacy-utility trade-off, we experiment with relatively large privacy budgets, varying the noise multiplier among $\{0.2, 0.1, 0.05\}$.

\noindent \textit{ {Mixup}}:
Mixup~\cite{ZHC+17, GHL+23} is a data augmentation technique that generates synthetic training samples by linearly interpolating pairs of inputs and their labels, thereby enhancing model generalization.
The interpolation coefficient is drawn from a Beta distribution, typically denoted as $\text{Beta}(\alpha, \alpha)$. In our experiments, we set $\alpha$ as $0.01$ to control the interpolation strength for defending against MIAs.

\noindent \textit{ {RelaxLoss}}:
RelaxLoss~\cite{CDY+22} is a notable defense against MIAs in centralized settings that preserves model utility. It adopts a relaxed training objective that intentionally increases the loss of training samples to narrow the gap between training and testing losses, thereby reducing the distinction between member and non-member data.
We adapt RelaxLoss to our federated setting by modifying the local training process accordingly.

\noindent \textit{ {HAMP}}: 
HAMP~\cite{ZCK+24} is a state-of-the-art defense against MIAs that mitigates model overconfidence. During training, it replaces conventional hard labels with high-entropy soft labels and adds an entropy-based regularizer to penalize overconfident predictions. At inference, HAMP enforces uniformly low confidence by substituting class-probability outputs with those produced from random inputs, while preserving the relative ordering of predicted labels.
In our experiments, we apply only the training-phase defense in the FL setting.

\section{Additional Experimental Results}

\begin{figure}[htbp]
	\centering
	\subfigure[Small coalition (Utility loss: 0.01).]{
		\centering
		\includegraphics[width=0.45\textwidth]{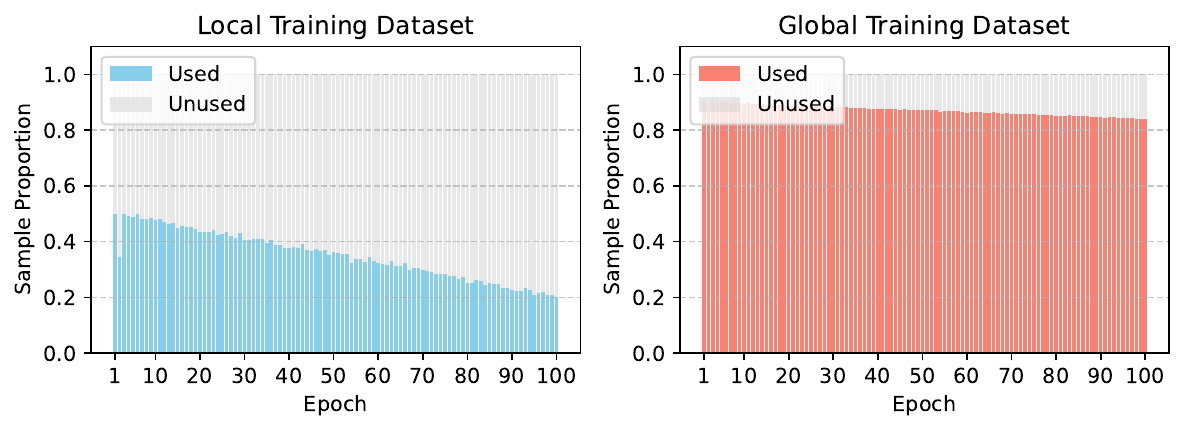}
		\label{fig:U2}
	}\\
	\subfigure[Large coalition (Utility loss: 0.07).]{
		\centering
		\includegraphics[width=0.45\textwidth]{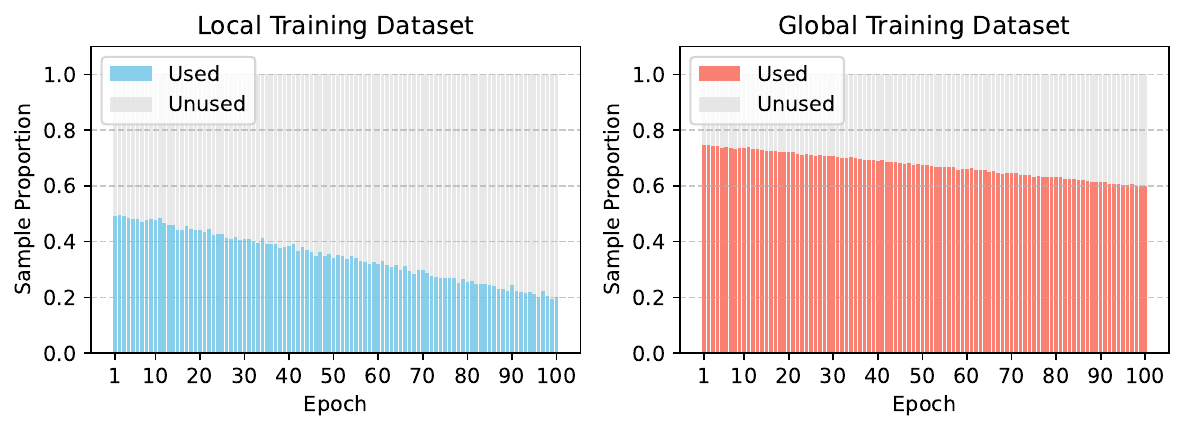}
		\label{fig:U5}
	}
	\caption{
    Usage of training samples per client (decaying from 50 to 20 classes) and across the entire FL system in each round under different coalition sizes on CIFAR10.
    Top: the coalition covers 20\% of the whole training data; Bottom: covers 50\%.
    }
	\label{fig:samplesize}
\end{figure}

\subsection{Training Data Usage}
\label{app:usage}
We investigate the practical usage of training samples from both the perspective of individual clients and the overall FL system, as shown in Figure~\ref{fig:samplesize}
As shown in the left subfigures, our assignment exerts a similar impact on individual clients. In contrast, the right subfigures illustrate that at the coalition level, when more clients participate in defense (i.e., larger coalition size), a greater portion of practical training samples is withheld, resulting in a more pronounced utility loss.

\subsection{More Defense Results}
\label{app:results}
We provide additional comparison results between CoFedMID and the baselines in the Pair and Half settings on CIFAR10 and TinyImageNet using ResNet18, as shown in Tables~\ref{tab:main_res18_c10_AUC}–\ref{tab:main_tiny_res18_TF01}.
Similar to the results on CIFAR100, our framework outperforms the baselines and achieves the lowest average attack performance, with only a small utility drop.
Moreover, we present average results on three datasets using WideResNet-16-3 in Tables~\ref{tab:main_c100_wrn}, \ref{tab:main_c10_wrn}, \ref{tab:main_tiny_wrn}, as the above results have demonstrated the stability of our framework.

\subsection{Micro-level MIA Analysis}
\label{app:microays}
Our defense treats all training samples, classes, and protected clients uniformly. Samples may be from assigned classes or reused as recycled samples, as shown in Figure~\ref{fig:our-data}, and there is no fixed mapping between classes and clients (Algorithm~\ref{alg:partition}). Table~\ref{tab:seqmia_tf01} shows at most 1.27\% TF01 difference across clients and classes.

\begin{table}[htbp]
    \centering
    \caption{TF01 values (\%) of SeqMIA on CIFAR10.}
    \setlength{\tabcolsep}{1.0pt}
    \label{tab:seqmia_tf01}
    \begin{tabular}{c|c|cccccccccc}
    \hline
     & Avg & c0 & c1 & c2 & c3 & c4 & c5 & c6 & c7 & c8 & c9 \\
    \hline
    Client0 & 0.40 & 0.12 & 0.04 & 0.00 & 0.00 & 0.00 & 0.00 & 0.27 & 0.00 & 0.44 & 0.12 \\
    Client4 & 0.27 & 0.00 & 0.08 & 0.46 & 0.10 & 0.39 & 0.01 & 0.81 & 1.27 & 0.06 & 0.16 \\
    \hline
    \end{tabular}
\end{table}

\subsection{Other Ablation Studies}
\label{app:ablation}

\begin{table}[htbp]
    \setlength{\abovecaptionskip}{0pt}  
    \setlength{\belowcaptionskip}{0pt}  
    \centering
    \footnotesize
    \caption{Comparison of different decay strategies on CIFAR100 (TF01).}
    \setlength{\tabcolsep}{2.5pt} 
    \label{tab:decay}
    \begin{tabular}{l|cccc|cc}
    \toprule
     Decay & Loss-Series & FedMIA-II & FTA-C & SeqMIA & Avg & Test acc\\
    \midrule
    Linear      & 2.62 & 0.11 & 3.6  & 0.64 & 1.74 & 0.43 \\
    Cosine      & 4.14 & 0.02 & 2.18 & 1.56 & 1.98 & 0.44 \\
    Exp         & 2.51 & 0.06 & 1.18 & 0.9  & 1.16 & 0.43 \\
    Poly        & 1.68 & 0.01 & 3.94 & 1.01 & 1.66 & 0.43 \\
    \bottomrule
    \end{tabular}
\end{table}

\begin{table}[htbp]
    \setlength{\abovecaptionskip}{0pt}  
    \setlength{\belowcaptionskip}{0pt}  
    \centering
    \footnotesize
    \caption{Impact of recycling threshold on CIFAR100 under different coalition sizes (AUC).}
    \label{tab:ab_recycling_ratio}
    \begin{tabular}{lc|cccc}
    \toprule
       \multicolumn{2}{c|}{$r_l$}  &   0.05 & 0.10 & 0.20 &  0.30 \\
    \midrule
    \multirow{2}{*}{Pair}      
        & SeqMIA    & 0.53 & 0.58 & 0.59 & 0.62  \\
        & Test Acc   & 0.47 & 0.46 & 0.46 & 0.47   \\
    \midrule
        \multirow{2}{*}{Half}     
        & SeqMIA    &   0.66 & 0.69 & 0.68 & 0.72  \\
        & Test Acc  &  0.44 & 0.44 & 0.45  &  0.45  \\
    \bottomrule
    \end{tabular}
\end{table}

\textbf{Impact of Different Decay Functions.}
We conduct experiments to evaluate the impact of different decay functions, including linear, cosine, exponential, and polynomial, on defense performance and FL utility.
Table~\ref{tab:decay} reports the comparison of these strategies on CIFAR100 in terms of multiple attack metrics and test accuracy.
The results show that non-linear approaches, such as exponential and polynomial functions, can further mitigate attack performance without incurring additional utility loss.

\textbf{Impact of Recycling Threshold.} 
To prevent excessive recycling of samples, we introduce a recycling threshold that limits the maximum number of samples recycled in each round. We investigate how varying this ratio affects both defense effectiveness and model utility. As shown in Table~\ref{tab:ab_recycling_ratio}, a lower recycling ratio consistently improves defense performance, while having little impact on the FL system’s utility.

\begin{table*}[htbp]
    \setlength{\abovecaptionskip}{0pt}  
    \setlength{\belowcaptionskip}{0pt}  
    \centering
    \footnotesize
    \setlength{\tabcolsep}{7pt} 
    \caption{AUC/TF01 results of defense performance under non-iid settings on CIFAR100.}
    \label{tab:noniid-c100}
    \begin{tabular}{l|c|ccccccc|c}
    \toprule
    \(\beta\) & \textbf{Defense} & \textbf{Loss-Series} & \textbf{Avg-Cosine} & \textbf{FedMIA-I} & \textbf{FedMIA-II} & \textbf{FTA-C} & \textbf{FTA-L} & \textbf{SeqMIA} & \textbf{Test Acc} \\
    \midrule
    \multirow{2}{*}{\(\infty\)}     
        & None      & 0.66/10.82 & 0.80/6.00 & 0.79/4.74 & 0.83/12.64 & 0.74/0.00 & 0.80/4.96 & 0.92/11.80 & 0.46 \\
        & CoFedMID  & 0.46/2.62 & 0.50/1.80 & 0.50/0.50 & 0.57/3.04 & 0.50/0.60 & 0.58/0.56 & 0.50/0.00 & 0.45 \\
    \midrule
    \multirow{2}{*}{10.0}     
        & None      & 0.71/19.32 & 0.78/10.69 & 0.82/19.47 & 0.85/26.51 & 0.68/6.84 & 0.79/1.20 & 0.98/32.93 & 0.47 \\
        & CoFedMID  & 0.48/3.82 & 0.52/2.11 & 0.50/1.47 & 0.59/3.53 & 0.47/0.00 & 0.57/1.36 & 0.54/0.03 & 0.45 \\
    \midrule
    \multirow{2}{*}{1.0}     
        & None      & 0.84/34.52 & 0.89/35.19 & 0.91/52.07 & 0.92/58.57 & 0.56/3.28 & 0.69/0.60 & 0.99/62.01 & 0.45 \\
        & CoFedMID  & 0.59/6.93 & 0.56/3.01 & 0.66/1.79 & 0.64/2.30 & 0.45/0.00 & 0.54/0.16 & 0.52/0.40 & 0.42 \\
    \midrule
    \multirow{2}{*}{0.5}     
        & None      & 0.89/49.91 & 0.94/57.36 & 0.95/75.47 & 0.95/76.01 & 0.49/4.52 & 0.61/0.40 & 1.00/63.77 & 0.38 \\
        & CoFedMID  & 0.70/11.62 & 0.54/3.06 & 0.77/2.18 & 0.66/0.21 & 0.45/3.32 & 0.54/0.12 & 0.58/0.46 & 0.37 \\
    \bottomrule
    \end{tabular}
\end{table*}

\begin{table*}[htbp]
    \setlength{\abovecaptionskip}{0pt}  
    \setlength{\belowcaptionskip}{0pt}  
    \centering
    \footnotesize
    \setlength{\tabcolsep}{7pt} 
    \caption{AUC/TF01 results of defense performance under non-iid settings on CIFAR10.}
    \label{tab:noniid-c10}
    \begin{tabular}{l|c|ccccccc|c}
    \toprule
    \(\beta\) & \textbf{Defense} & \textbf{Loss-Series} & \textbf{Avg-Cosine} & \textbf{FedMIA-I} & \textbf{FedMIA-II} & \textbf{FTA-C} & \textbf{FTA-L} & \textbf{SeqMIA} & \textbf{Test Acc} \\
    \midrule
    \multirow{2}{*}{\(\infty\)}     
        & None      & 0.62/8.36 & 0.59/0.46 & 0.62/0.66 & 0.57/0.54 & 0.56/0.40 & 0.61/0.08 & 0.91/6.89 & 0.77 \\
        & CoFedMID  & 0.49/5.28 & 0.47/8.48 & 0.52/3.58 & 0.38/0.18 & 0.52/0.00 & 0.54/0.00 & 0.61/0.00 & 0.75 \\
    \midrule
    \multirow{2}{*}{10.0}     
        & None      & 0.58/3.85 & 0.61/4.77 & 0.67/4.69 & 0.55/4.73 & 0.57/0.00 & 0.61/0.00 & 0.67/0.35 & 0.77 \\
        & CoFedMID  & 0.50/2.24 & 0.47/5.63 & 0.54/7.15 & 0.46/0.04 & 0.51/1.12 & 0.54/0.84 & 0.48/1.84 & 0.72 \\

    \midrule
    \multirow{2}{*}{1.0}     
        & None      & 0.90/38.14 & 0.90/34.26 & 0.91/49.36 & 0.94/47.18 & 0.56/0.00 & 0.59/0.00 & 0.99/56.09 & 0.73 \\
        & CoFedMID  & 0.70/0.00 & 0.58/13.94 & 0.70/9.68 & 0.73/0.10 & 0.48/0.64 & 0.48/0.72 & 0.53/0.00 & 0.70 \\
    \midrule
    \multirow{2}{*}{0.5}     
        & None      & 0.86/32.73 & 0.85/24.98 & 0.85/43.46 & 0.90/31.16 & 0.58/0.60 & 0.60/0.00 & 0.99/32.49 & 0.68 \\
        & CoFedMID  & 0.65/0.00 & 0.41/1.81 & 0.53/0.00 & 0.65/0.00 & 0.48/0.08 & 0.52/0.56 & 0.51/0.08 & 0.66 \\
    \bottomrule
    \end{tabular}
\end{table*}

\begin{table*}[htpb]
    \setlength{\abovecaptionskip}{0pt}  
    \setlength{\belowcaptionskip}{0pt}  
    \centering
    \footnotesize
    \caption{AUC results of defense methods against different attacks on CIFAR10 with ResNet18.
    \textbf{Avg} is the average over seven MIAs, and $\Delta$\textbf{Acc} denotes the test accuracy change relative to the undefended FL (original accuracy).
    ‘Pair’ denotes a defender coalition of two clients, whereas ‘Half’ refers to a coalition comprising half of the clients in the FL system.
    The best average results are highlighted in \textbf{bold}.
    }
    \label{tab:main_res18_c10_AUC}
    \begin{tabular}{l|l|ccccccc|cc}
    \toprule
    Case & \textbf{Defense} & \textbf{Loss-Series} & \textbf{Avg-Cosine} & \textbf{FedMIA-I} & \textbf{FedMIA-II} & \textbf{FTA-C} & \textbf{FTA-L} & \textbf{SeqMIA} & \textbf{Avg} $\downarrow$ & $\Delta$\textbf{Acc} $\uparrow$\\
    \midrule
    \multirow{2}{*}{} 
        & None  & $0.61 \pm 0.01$ & $0.60 \pm 0.02$ & $0.56 \pm 0.03$ & $0.62 \pm 0.02$ & $0.56 \pm 0.01$ & $0.60 \pm 0.02$ & $0.88 \pm 0.03$ & 0.63 & (0.75) \\ 
        \midrule
    \multirow{7}{*}{Pair} 
        & GradSparse  & $0.61 \pm 0.01$ & $0.60 \pm 0.00$ & $0.55 \pm 0.03$ & $0.61 \pm 0.01$ & $0.55 \pm 0.01$ & $0.59 \pm 0.01$ & $0.90 \pm 0.02$ & $0.63$ & +0.02 \\ 
        & GradNoise   & $0.60 \pm 0.02$ & $0.57 \pm 0.01$ & $0.57 \pm 0.03$ & $0.54 \pm 0.00$ & $0.54 \pm 0.02$ & $0.59 \pm 0.03$ & $0.71 \pm 0.18$ & $0.57$ & -0.06 \\ 
        & DPSGD  & $0.57 \pm 0.03$ & $0.52 \pm 0.00$ & $0.64 \pm 0.04$ & $0.54 \pm 0.00$ & $0.52 \pm 0.01$ & $0.52 \pm 0.01$ & $0.81 \pm 0.04$ & $0.58$ & -0.03 \\ 
        & Mixup  & $0.52 \pm 0.00$ & $0.53 \pm 0.01$ & $0.71 \pm 0.00$ & $0.52 \pm 0.00$ & $0.52 \pm 0.01$ & $0.56 \pm 0.00$ & $0.63 \pm 0.03$ & $0.56$ & -0.00 \\ 
        & RelaxLoss   & $0.56 \pm 0.00$ & $0.52 \pm 0.00$ & $0.60 \pm 0.00$ & $0.54 \pm 0.00$ & $0.53 \pm 0.01$ & $0.54 \pm 0.00$ & $0.82 \pm 0.01$ & $0.56$ & -0.01 \\ 
        & HAMP   & $0.57 \pm 0.00$ & $0.56 \pm 0.00$ & $0.70 \pm 0.00$ & $0.58 \pm 0.00$ & $0.53 \pm 0.01$ & $0.56 \pm 0.00$ & $0.91 \pm 0.01$ & $0.63$ & +0.01 \\ 
        & \textbf{CoFedMID}  & $0.53 \pm 0.01$ & $0.53 \pm 0.01$ & $0.57 \pm 0.01$ & $0.52 \pm 0.01$ & $0.54 \pm 0.01$ & $0.55 \pm 0.01$ & $0.57 \pm 0.04$ & \textbf{0.52} & -0.01 \\ 
        \midrule

    \multirow{7}{*}{Half} 
        & GradSparse   & $0.61 \pm 0.01$ & $0.61 \pm 0.00$ & $0.57 \pm 0.03$ & $0.64 \pm 0.00$ & $0.58 \pm 0.01$ & $0.60 \pm 0.01$ & $0.86 \pm 0.02$ & 0.61 & +0.02 \\ 
        & GradNoise   & $0.62 \pm 0.00$ & $0.60 \pm 0.00$ & $0.56 \pm 0.00$ & $0.57 \pm 0.00$ & $0.59 \pm 0.01$ & $0.63 \pm 0.01$ & $0.86 \pm 0.01$ & 0.57 & -0.00 \\ 
        & DPSGD    & $0.58 \pm 0.02$ & $0.52 \pm 0.00$ & $0.56 \pm 0.03$ & $0.54 \pm 0.01$ & $0.52 \pm 0.01$ & $0.51 \pm 0.01$ & $0.82 \pm 0.02$ & \textbf{0.51} & -0.09 \\ 
        & Mixup   & $0.52 \pm 0.00$ & $0.52 \pm 0.00$ & $0.68 \pm 0.00$ & $0.52 \pm 0.00$ & $0.53 \pm 0.00$ & $0.54 \pm 0.00$ & $0.65 \pm 0.04$ & 0.55 & -0.12 \\ 
        & RelaxLoss    & $0.57 \pm 0.00$ & $0.51 \pm 0.00$ & $0.58 \pm 0.00$ & $0.54 \pm 0.00$ & $0.52 \pm 0.00$ & $0.54 \pm 0.00$ & $0.84 \pm 0.01$ & 0.56 & -0.07 \\ 
        & HAMP   & $0.59 \pm 0.00$ & $0.58 \pm 0.00$ & $0.64 \pm 0.00$ & $0.61 \pm 0.00$ & $0.52 \pm 0.01$ & $0.56 \pm 0.01$ & $0.86 \pm 0.05$ & 0.61 & -0.01 \\ 
        & \textbf{CoFedMID}  & $0.54 \pm 0.00$ & $0.52 \pm 0.01$ & $0.52 \pm 0.01$ & $0.54 \pm 0.00$ & $0.52 \pm 0.01$ & $0.56 \pm 0.00$ & $0.60 \pm 0.03$ & 0.54 & -0.03 \\ 
    \bottomrule
    \end{tabular}
\end{table*}

\begin{table*}[htpb]
    \setlength{\abovecaptionskip}{0pt}  
    \setlength{\belowcaptionskip}{0pt}  
    \centering
    \footnotesize
    \caption{TF01 results of defense methods against different attacks on CIFAR10 with ResNet18.}
    \label{tab:main_res18_c10_TF01}
    \begin{tabular}{l|l|ccccccc|cc}
    \toprule
    Case & \textbf{Defense} & \textbf{Loss-Series} & \textbf{Avg-Cosine} & \textbf{FedMIA-I} & \textbf{FedMIA-II} & \textbf{FTA-C} & \textbf{FTA-L} & \textbf{SeqMIA} & \textbf{Avg} $\downarrow$ & $\Delta$\textbf{Acc} $\uparrow$ \\
    \midrule
    \multirow{1}{*}{} 
         & No Defense   & $8.52 \pm 2.13$ & $0.34 \pm 0.31$ & $0.60 \pm 0.58$ & $1.17 \pm 1.04$ & $0.75 \pm 0.57$ & $0.35 \pm 0.42$ & $7.34 \pm 4.22$ & 2.44 & (0.75) \\
        \midrule
    \multirow{7}{*}{Pair} 
         & GradSparse   & $10.23 \pm 0.69$ & $0.36 \pm 0.20$ & $0.14 \pm 0.12$ & $0.46 \pm 0.09$ & $0.51 \pm 0.28$ & $0.58 \pm 0.30$ & $8.03 \pm 3.13$ & $2.90$ & +0.02 \\
        & GradNoise  & $9.90 \pm 2.63$ &  {$0.26 \pm 0.24$} & $3.74 \pm 3.33$ & $5.97 \pm 4.60$ & $0.20 \pm 0.29$ & $0.34 \pm 0.34$ & $2.91 \pm 3.09$ & $3.33$ & -0.06 \\
        & DPSGD       & $11.82 \pm 1.91$ & $1.23 \pm 0.16$ & $2.85 \pm 1.11$ & $0.32 \pm 0.18$ & $0.15 \pm 0.24$ & $0.33 \pm 0.37$ & $4.20 \pm 3.17$ & $2.99$ & -0.03 \\
        & Mixup       & $10.86 \pm 0.19$ & $5.98 \pm 0.12$ & $8.83 \pm 0.21$ & $11.72 \pm 0.37$ & $2.42 \pm 0.51$ &  $0.02 \pm 0.05$ &  $0.10 \pm 0.17$ & $5.70$ & -0.00 \\
        & RelaxLoss   &  {$6.56 \pm 0.02$} & $0.00 \pm 0.00$ &  $0.00 \pm 0.00$ &  {$0.30 \pm 0.02$} & $0.90 \pm 0.19$ & $0.35 \pm 0.15$ & $3.39 \pm 1.01$ & $1.64$ & -0.01 \\
        & HAMP        & $12.34 \pm 0.08$ & $4.99 \pm 0.18$ & $0.65 \pm 0.01$ & $5.14 \pm 0.51$ &  {$0.06 \pm 0.08$} & $0.30 \pm 0.12$ & $9.45 \pm 5.32$ & $4.70$ & +0.01 \\
        & \textbf{CoFedMID}    &  $1.47 \pm 1.12$ &  $0.00 \pm 0.00$ &  {$0.03 \pm 0.01$} &  $0.07 \pm 0.07$ &  $0.04 \pm 0.06$ &  {$0.12 \pm 0.17$} &  {$0.52 \pm 0.47$} &  \textbf{0.32} & -0.01 \\
        \midrule
    \multirow{7}{*}{Half}
        & GradSparse  &  {$8.46 \pm 0.31$} &  $0.11 \pm 0.04$ &  $0.09 \pm 0.01$ &  $0.10 \pm 0.02$ & $0.48 \pm 0.22$ &  $0.13 \pm 0.17$ & $6.05 \pm 3.13$ & 2.64 & +0.02 \\
        & GradNoise   & $10.95 \pm 0.16$ &  {$0.17 \pm 0.02$} & $0.34 \pm 0.01$ &  {$0.44 \pm 0.06$} &  $0.17 \pm 0.16$ & $0.42 \pm 0.24$ & $5.22 \pm 2.29$ & 1.61 & -0.00 \\
        & DPSGD       & $13.55 \pm 0.92$ & $0.91 \pm 0.31$ & $0.68 \pm 0.24$ & $0.57 \pm 0.09$ & $0.24 \pm 0.25$ & $0.38 \pm 0.42$ & $7.85 \pm 2.56$ & 2.64 & -0.09 \\
        & Mixup       & $9.97 \pm 0.04$ & $4.02 \pm 0.06$ & $3.55 \pm 0.10$ & $6.24 \pm 0.52$ & $3.99 \pm 0.23$ & $1.06 \pm 0.21$ &  $0.72 \pm 0.65$ & 3.22 & -0.12 \\
        & RelaxLoss    & $12.01 \pm 0.12$ & $0.76 \pm 0.00$ &  {$0.19 \pm 0.01$} & $0.90 \pm 0.09$ & $1.63 \pm 0.29$ &  {$0.27 \pm 0.16$} & $3.38 \pm 1.99$ & 1.28 & -0.07 \\
        & HAMP        & $14.09 \pm 0.12$ & $6.86 \pm 0.21$ & $0.75 \pm 0.08$ & $7.70 \pm 0.38$ & $0.54 \pm 0.08$ & $0.58 \pm 0.22$ & $9.89 \pm 4.88$ & 5.49 & -0.01 \\
        & \textbf{CoFedMID}     &  $3.16 \pm 0.98$ & $0.33 \pm 0.26$ & $0.23 \pm 0.34$ & $0.77 \pm 0.74$ &  {$0.22 \pm 0.39$} & $0.44 \pm 0.29$ &  {$1.74 \pm 0.83$} &  \textbf{1.13} & -0.03 \\
    \bottomrule
    \end{tabular}
\end{table*}

\begin{table*}[htpb]
    \setlength{\abovecaptionskip}{0pt}  
    \setlength{\belowcaptionskip}{0pt}  
    \centering
    \footnotesize
    \caption{AUC results of defense methods against different attacks on TinyImageNet with ResNet18.}
    \label{tab:main_tiny_res18_AUC}
    \begin{tabular}{l|l|ccccccc|cc}
    \toprule
    Case & \textbf{Defense} & \textbf{Loss-Series} & \textbf{Avg-Cosine} & \textbf{FedMIA-I} & \textbf{FedMIA-II} & \textbf{FTA-C} & \textbf{FTA-L} & \textbf{SeqMIA} & \textbf{Avg} $\downarrow$ & $\Delta$\textbf{Acc} $\uparrow$ \\
    \midrule
    \multirow{2}{*}{} 
        & None  & $0.65 \pm 0.01$ & $0.79 \pm 0.02$ & $0.79 \pm 0.02$ & $0.83 \pm 0.01$ & $0.57 \pm 0.03$ & $0.72 \pm 0.01$ & $0.98 \pm 0.01$ & 0.76 & (0.39) \\
        \midrule
        
    \multirow{7}{*}{Pair} 
        & GradSparse  & $0.64 \pm 0.01$ & $0.78 \pm 0.01$ & $0.76 \pm 0.02$ & $0.82 \pm 0.01$ & $0.54 \pm 0.01$ & $0.70 \pm 0.01$ & $0.97 \pm 0.01$ & 0.74 & +0.02 \\
        & GradNoise   & $0.60 \pm 0.00$ & $0.70 \pm 0.00$ & $0.56 \pm 0.00$ & $0.65 \pm 0.00$ & $0.53 \pm 0.00$ & $0.70 \pm 0.00$ & $0.93 \pm 0.01$ & 0.67 & +0.01 \\
        & DPSGD   & $0.52 \pm 0.00$ & $0.55 \pm 0.00$ & $0.60 \pm 0.00$ & $0.54 \pm 0.00$ & $0.57 \pm 0.00$ & $0.53 \pm 0.01$ & $0.80 \pm 0.01$ & 0.54 & -0.13 \\
        & Mixup  & $0.54 \pm 0.00$ & $0.60 \pm 0.00$ & $0.62 \pm 0.00$ & $0.57 \pm 0.00$ & $0.54 \pm 0.00$ & $0.55 \pm 0.01$ & $0.87 \pm 0.02$ & 0.56 & -0.02 \\
        & RelaxLoss  & $0.63 \pm 0.00$ & $0.70 \pm 0.00$ & $0.73 \pm 0.00$ & $0.79 \pm 0.00$ & $0.56 \pm 0.01$ & $0.72 \pm 0.00$ & $0.96 \pm 0.00$ & 0.73 & +0.01 \\
        & HAMP   & $0.56 \pm 0.00$ & $0.52 \pm 0.00$ & $0.70 \pm 0.00$ & $0.51 \pm 0.00$ & $0.56 \pm 0.00$ & $0.53 \pm 0.01$ & $0.98 \pm 0.00$ & 0.59 & -0.02 \\
        & \textbf{CoFedMID}  & $0.51 \pm 0.01$ & $0.50 \pm 0.01$ & $0.53 \pm 0.00$ & $0.56 \pm 0.01$ & $0.54 \pm 0.01$ & $0.57 \pm 0.01$ & $0.58 \pm 0.01$ & \textbf{0.52} & -0.01 \\
        \midrule
        
    \multirow{7}{*}{Half} 
        & GradSparse & $0.65 \pm 0.00$ & $0.78 \pm 0.01$ & $0.78 \pm 0.01$ & $0.82 \pm 0.00$ & $0.54 \pm 0.01$ & $0.71 \pm 0.01$ & $0.97 \pm 0.00$ & 0.75 & -0.00 \\
        & GradNoise   & $0.60 \pm 0.00$ & $0.71 \pm 0.00$ & $0.66 \pm 0.00$ & $0.67 \pm 0.00$ & $0.50 \pm 0.00$ & $0.68 \pm 0.01$ & $0.95 \pm 0.01$ & 0.68 & -0.02 \\
        & DPSGD    & $0.53 \pm 0.00$ & $0.54 \pm 0.00$ & $0.53 \pm 0.00$ & $0.53 \pm 0.00$ & $0.59 \pm 0.01$ & $0.51 \pm 0.00$ & $0.86 \pm 0.01$ & 0.55 & -0.25 \\
        & Mixup    & $0.53 \pm 0.00$ & $0.57 \pm 0.00$ & $0.56 \pm 0.00$ & $0.55 \pm 0.00$ & $0.58 \pm 0.01$ & $0.52 \pm 0.00$ & $0.92 \pm 0.01$ & 0.56 & -0.09 \\
        & RelaxLoss   & $0.64 \pm 0.00$ & $0.73 \pm 0.00$ & $0.77 \pm 0.00$ & $0.81 \pm 0.00$ & $0.55 \pm 0.01$ & $0.72 \pm 0.00$ & $0.97 \pm 0.00$ & 0.74 & -0.02 \\
        & HAMP   & $0.53 \pm 0.00$ & $0.50 \pm 0.00$ & $0.64 \pm 0.00$ & $0.50 \pm 0.00$ & $0.58 \pm 0.00$ & $0.52 \pm 0.01$ & $0.99 \pm 0.01$ & 0.58 & -0.11 \\
        & \textbf{CoFedMID}   & $0.52 \pm 0.00$ & $0.51 \pm 0.01$ & $0.50 \pm 0.01$ & $0.53 \pm 0.00$ & $0.56 \pm 0.01$ & $0.55 \pm 0.01$ & $0.61 \pm 0.05$ & \textbf{0.52} & -0.02 \\
    \bottomrule
    \end{tabular}
\end{table*}

\begin{table*}[htpb]
    \setlength{\abovecaptionskip}{0pt}  
    \setlength{\belowcaptionskip}{0pt}  
    \centering
    \footnotesize
    \caption{TF01 results of defense methods against different attacks on TinyImageNet with ResNet18.}
    \label{tab:main_tiny_res18_TF01}
    \setlength{\tabcolsep}{4.5pt} 
    \begin{tabular}{l|l|ccccccc|rc}
    \toprule
    Case & \textbf{Defense} & \textbf{Loss-Series} & \textbf{Avg-Cosine} & \textbf{FedMIA-I} & \textbf{FedMIA-II} & \textbf{FTA-C} & \textbf{FTA-L} & \textbf{SeqMIA} & \textbf{Avg} $\downarrow$ & $\Delta$\textbf{Acc} $\uparrow$ \\
    \midrule
    \multirow{2}{*}{} 
        & None & $15.68 \pm 2.10$ & $7.72 \pm 1.64$ & $8.94 \pm 2.05$ & $15.66 \pm 1.65$ & $4.16 \pm 1.20$ & $0.24 \pm 0.14$ & $21.49 \pm 1.93$ & 10.53  & (0.39) \\ 
        \midrule
        
    \multirow{7}{*}{Pair} 
        & GradSparse  & $14.35 \pm 0.10$ & $7.61 \pm 0.95$ & $8.96 \pm 0.60$ & $11.75 \pm 1.49$ & $2.49 \pm 1.11$ & $1.11 \pm 0.35$ & $25.24 \pm 7.39$ & 10.22 & +0.02 \\ 
        & GradNoise  & $16.18 \pm 0.01$ & $3.74 \pm 0.10$ & $3.03 \pm 0.09$ & $4.51 \pm 0.22$ & $3.14 \pm 0.20$ & $1.06 \pm 0.10$ & $8.87 \pm 3.50$ & 5.79 & +0.01 \\ 
        & DPSGD & $9.03 \pm 0.34$ & $1.94 \pm 0.05$ & $10.80 \pm 0.16$ & $0.21 \pm 0.03$ & $1.94 \pm 1.13$ & $0.26 \pm 0.10$ & $3.73 \pm 3.22$ & 3.99 & -0.13 \\ 
        & Mixup  & $10.49 \pm 0.10$ & $10.23 \pm 0.50$ & $13.81 \pm 0.27$ & $15.74 \pm 0.39$ & $4.86 \pm 0.56$ & $0.57 \pm 0.06$ & $1.55 \pm 0.93$ & 8.18 & -0.02 \\ 
        & RelaxLoss & $14.28 \pm 0.01$ & $2.23 \pm 0.04$ & $6.03 \pm 0.02$ & $7.84 \pm 0.23$ & $6.18 \pm 0.38$ & $1.58 \pm 0.22$ & $17.21 \pm 2.65$ & 7.91 & +0.01 \\ 
        & HAMP  & $11.92 \pm 0.00$ & $11.92 \pm 0.17$ & $0.18 \pm 0.04$ & $15.14 \pm 0.11$ & $3.41 \pm 0.28$ & $0.07 \pm 0.03$ & $37.49 \pm 16.02$ & 11.45 & -0.02 \\ 
        & \textbf{CoFedMID}  & $3.64 \pm 0.23$ & $1.58 \pm 0.36$ & $0.14 \pm 0.10$ & $2.04 \pm 0.79$ & $3.33 \pm 3.16$ & $0.55 \pm 0.33$ & $0.49 \pm 0.39$ & \textbf{1.68} & -0.01\\ 
        \midrule
        
    \multirow{7}{*}{Half} 
        & GradSparse  & $14.22 \pm 0.82$ & $7.42 \pm 0.93$ & $7.76 \pm 1.07$ & $13.93 \pm 0.53$ & $4.61 \pm 2.17$ & $0.01 \pm 0.02$ & $22.49 \pm 3.99$ & 10.06 & -0.00 \\ 
        & GradNoise  & $15.62 \pm 0.00$ & $2.95 \pm 0.15$ & $2.29 \pm 0.18$ & $4.77 \pm 0.06$ & $6.54 \pm 0.52$ & $0.70 \pm 0.15$ & $20.40 \pm 5.47$ & 7.61 & -0.02 \\ 
        & DPSGD  & $9.39 \pm 0.18$ & $2.31 \pm 0.06$ & $7.12 \pm 0.17$ & $0.32 \pm 0.03$ & $0.67 \pm 0.69$ & $0.25 \pm 0.21$ & $7.82 \pm 1.24$ & 4.13 & -0.25 \\ 
        & Mixup  & $11.54 \pm 0.18$ & $5.05 \pm 0.39$ & $7.65 \pm 0.18$ & $11.72 \pm 0.30$ & $3.06 \pm 0.23$ & $0.52 \pm 0.44$ & $7.02 \pm 6.63$ & 6.65 & -0.09 \\ 
        & RelaxLoss  & $16.18 \pm 0.11$ & $4.12 \pm 0.13$ & $6.28 \pm 0.22$ & $10.81 \pm 0.49$ & $6.92 \pm 0.71$ & $0.09 \pm 0.12$ & $23.28 \pm 5.44$ & 9.67 & -0.02 \\ 
        & HAMP  & $12.34 \pm 0.02$ & $9.16 \pm 0.30$ & $0.00 \pm 0.00$ & $7.89 \pm 0.49$ & $2.02 \pm 1.14$ & $0.73 \pm 0.16$ & $47.81 \pm 18.30$ & 11.42 & -0.11 \\ 
        & \textbf{CoFedMID}   & $6.00 \pm 0.93$ & $3.15 \pm 1.62$ & $0.31 \pm 0.14$ & $0.91 \pm 0.24$ & $2.89 \pm 2.88$ & $0.16 \pm 0.14$ & $0.20 \pm 0.30$ & \textbf{1.94} & -0.02 \\ 
    \bottomrule
    \end{tabular}
\end{table*}

\begin{table*}[htpb]
    \setlength{\abovecaptionskip}{0pt}  
    \setlength{\belowcaptionskip}{0pt}  
    \centering
    \footnotesize
    \caption{AUC/TF01 results of defense methods against different attacks on CIFAR100 with WideResNet.}
    \label{tab:main_c100_wrn}
    \setlength{\tabcolsep}{4.5pt} 
    \begin{tabular}{l|l|ccccccc|cc}
    \toprule
    Case & \textbf{Defense} & \textbf{Loss-Series} & \textbf{Avg-Cosine} & \textbf{FedMIA-I} & \textbf{FedMIA-II} & \textbf{FTA-C} & \textbf{FTA-L} & \textbf{SeqMIA} & \textbf{Avg} $\downarrow$ & $\Delta$\textbf{Acc} $\uparrow$ \\
    \midrule
    \multirow{2}{*}{} 
    & None & 0.70/4.84 & 0.84/32.17 & 0.84/27.16 & 0.88/45.59 & 0.82/6.87 & 0.74/1.18 & 0.98/16.72 & 0.83/19.05 & (0.57) \\
    \midrule
    \multirow{7}{*}{Pair} 
    & GradSparse & 0.70/5.34 & 0.84/32.99 & 0.85/29.36 & 0.88/47.78 & 0.83/13.78 & 0.75/0.93 & 0.99/25.33 & 0.83/22.22 & -0.01 \\
    & GradNoise & 0.62/6.02 & 0.74/5.92 & 0.46/3.84 & 0.70/12.80 & 0.66/6.81 & 0.63/0.66 & 0.61/3.06 & 0.63/5.59 & -0.10 \\
    & DPSGD & 0.58/17.76 & 0.56/4.64 & 0.55/2.64 & 0.54/0.49 & 0.52/0.50 & 0.54/0.02 & 0.74/4.34 & 0.58/4.34 & -0.00 \\
    & Mixup & 0.52/7.08 & 0.74/8.40 & 0.51/10.54 & 0.69/18.69 & 0.59/0.21 & 0.69/0.38 & 0.47/0.39 & 0.60/6.53 & -0.03 \\
    & RelaxLoss & 0.63/9.96 & 0.65/1.26 & 0.59/0.81 & 0.78/8.21 & 0.76/1.54 & 0.64/0.21 & 0.64/3.39 & 0.67/3.63 & -0.00 \\
    & HAMP & 0.60/5.66 & 0.65/16.92 & 0.72/10.70 & 0.68/20.77 & 0.54/0.09 & 0.56/0.00 & 0.60/2.86 & 0.61/8.14 & -0.03 \\
    & \textbf{CoFedMID} & 0.52/2.02 & 0.61/1.51 & 0.52/0.70 & 0.63/1.09 & 0.57/2.04 & 0.55/1.21 & 0.56/0.18 & \textbf{0.57}/\textbf{1.25} & -0.01 \\

    \midrule
    \multirow{7}{*}{Half} 
    & GradSparse & 0.71/6.56 & 0.84/34.38 & 0.84/29.07 & 0.88/47.29 & 0.82/14.73 & 0.75/2.58 & 0.98/24.08 & 0.83/22.67 & -0.01 \\
    & GradNoise & 0.62/6.45 & 0.72/6.19 & 0.57/2.32 & 0.70/12.18 & 0.65/3.85 & 0.62/1.25 & 0.63/3.56 & 0.64/5.11 & -0.16 \\
    & DPSGD & 0.61/18.22 & 0.56/5.02 & 0.59/2.50 & 0.53/0.48 & 0.50/0.02 & 0.52/0.00 & 0.86/15.52 & 0.60/5.97 & -0.01 \\
    & Mixup & 0.55/9.02 & 0.70/5.93 & 0.51/5.07 & 0.69/14.49 & 0.64/2.09 & 0.72/1.65 & 0.52/0.63 & 0.62/5.55 & -0.04 \\
    & RelaxLoss & 0.64/11.06 & 0.65/3.34 & 0.62/0.78 & 0.76/6.47 & 0.68/1.56 & 0.62/0.62 & 0.59/6.20 & 0.65/4.29 & -0.02 \\
    & HAMP & 0.55/6.86 & 0.67/18.59 & 0.76/4.91 & 0.70/23.17 & 0.54/0.13 & 0.57/0.33 & 0.61/6.64 & 0.62/8.66 & -0.09 \\
    & \textbf{CoFedMID} & 0.54/2.60 & 0.60/1.23 & 0.52/0.70 & 0.61/1.22 & 0.55/4.94 & 0.54/2.40 & 0.56/0.28 & \textbf{0.56}/\textbf{1.91} & -0.03 \\
    \bottomrule
    \end{tabular}
\end{table*}

\begin{table*}[htpb]
    \setlength{\abovecaptionskip}{0pt}  
    \setlength{\belowcaptionskip}{0pt}  
    \centering
    \footnotesize
    \caption{AUC/TF01 results of defense methods against different attacks on CIFAR10 with WideResNet.}
    \label{tab:main_c10_wrn}
    \setlength{\tabcolsep}{4.5pt} 
    \begin{tabular}{l|l|ccccccc|cc}
    \toprule
    Case & \textbf{Defense} & \textbf{Loss-Series} & \textbf{Avg-Cosine} & \textbf{FedMIA-I} & \textbf{FedMIA-II} & \textbf{FTA-C} & \textbf{FTA-L} & \textbf{SeqMIA} & \textbf{Avg} $\downarrow$ & $\Delta$\textbf{Acc} $\uparrow$ \\
    \midrule
    & None & 0.61/1.54 & 0.63/15.78 & 0.60/13.72 & 0.67/22.89 & 0.57/0.93 & 0.59/0.09 & 0.94/16.06 & 0.66/10.14 & (0.85) \\
    \midrule
    
    \multirow{7}{*}{Pair} 
    & GradSparse & 0.60/2.46 & 0.61/13.10 & 0.57/11.84 & 0.64/23.42 & 0.57/2.60 & 0.59/0.00 & 0.70/1.03 & 0.61/7.78 & -0.01 \\
    & GradNoise & 0.58/4.42 & 0.62/1.92 & 0.34/4.63 & 0.57/7.63 & 0.54/0.82 & 0.55/0.00 & 0.61/0.42 & 0.54/2.83 & -0.08 \\
    & DPSGD & 0.62/12.10 & 0.52/2.02 & 0.43/0.84 & 0.54/0.84 & 0.50/0.28 & 0.51/0.96 & 0.68/0.83 & 0.54/2.55 & -0.02 \\
    & Mixup & 0.50/3.90 & 0.68/9.48 & 0.61/4.24 & 0.52/11.46 & 0.50/0.40 & 0.56/0.80 & 0.51/0.00 & 0.54/4.33 & -0.03 \\
    & RelaxLoss & 0.55/8.06 & 0.56/1.70 & 0.46/6.60 & 0.60/14.74 & 0.54/1.52 & 0.47/0.16 & 0.52/0.45 & \textbf{0.53}/4.75 & -0.03 \\
    & HAMP & 0.59/3.64 & 0.59/15.10 & 0.52/19.68 & 0.61/24.02 & 0.59/2.48 & 0.63/1.32 & 0.67/20.11 & 0.60/12.34 & -0.02 \\
    & \textbf{CoFedMID} & 0.56/3.29 & 0.56/0.03 & 0.58/0.09 & 0.58/0.15 & 0.59/1.20 & 0.60/3.02 & 0.54/2.14 & 0.56/\textbf{1.42} & -0.01 \\

    \midrule
    \multirow{7}{*}{Half} 
    & GradSparse & 0.62/0.84 & 0.63/13.58 & 0.59/15.52 & 0.66/22.96 & 0.58/0.72 & 0.59/0.00 & 0.90/14.25 & 0.65/9.70 & -0.01 \\
    & GradNoise & 0.58/1.68 & 0.61/0.60 & 0.44/0.82 & 0.56/7.38 & 0.54/0.76 & 0.55/0.04 & 0.63/0.86 & 0.56/1.73 & -0.16 \\
    & DPSGD & 0.60/7.91 & 0.54/1.58 & 0.51/0.31 & 0.54/0.58 & 0.51/0.42 & 0.51/0.02 & 0.73/4.12 & 0.56/2.13 & -0.12 \\
    & Mixup & 0.53/4.30 & 0.64/7.96 & 0.62/3.12 & 0.51/7.56 & 0.58/0.16 & 0.62/0.32 & 0.51/0.51 & 0.57/3.42 & -0.05 \\
    & RelaxLoss & 0.54/14.08 & 0.54/0.62 & 0.57/2.22 & 0.56/5.70 & 0.51/0.00 & 0.52/0.04 & 0.52/0.24 & \textbf{0.53}/3.27 & -0.08 \\
    & HAMP & 0.62/7.30 & 0.61/13.92 & 0.57/16.84 & 0.65/23.98 & 0.61/0.00 & 0.63/2.24 & 0.64/15.29 & 0.62/11.37 & -0.04 \\
    & \textbf{CoFedMID} & 0.57/1.03 & 0.56/0.11 & 0.46/0.10 & 0.60/0.85 & 0.58/2.46 & 0.60/1.92 & 0.64/1.69 & 0.57/\textbf{1.17} & -0.02 \\
    \bottomrule
    \end{tabular}
\end{table*}

\begin{table*}[htpb]
    \setlength{\abovecaptionskip}{0pt}  
    \setlength{\belowcaptionskip}{0pt}  
    \centering
    \footnotesize
    \caption{AUC/TF01 results of defense methods against different attacks on TinyImageNet with WideResNet.}
    \label{tab:main_tiny_wrn}
    \setlength{\tabcolsep}{4.5pt} 
    \begin{tabular}{l|l|ccccccc|cc}
    \toprule
    Case & \textbf{Defense} & \textbf{Loss-Series} & \textbf{Avg-Cosine} & \textbf{FedMIA-I} & \textbf{FedMIA-II} & \textbf{FTA-C} & \textbf{FTA-L} & \textbf{SeqMIA} & \textbf{Avg} $\downarrow$ & $\Delta$\textbf{Acc} $\uparrow$ \\
    \midrule
    & None & 0.69/19.41 & 0.84/20.86 & 0.82/17.34 & 0.86/30.66 & 0.64/0.38 & 0.71/0.25 & 0.98/29.25 & 0.79/16.88 & (0.49) \\
    \midrule

    \multirow{7}{*}{Pair} 
    & GradSparse & 0.69/19.66 & 0.85/22.72 & 0.82/17.70 & 0.87/31.88 & 0.64/0.36 & 0.71/0.72 & 0.98/26.15 & 0.79/17.03 & -0.00 \\
    & GradNoise & 0.60/11.74 & 0.70/4.14 & 0.48/7.55 & 0.67/13.37 & 0.57/0.66 & 0.63/0.04 & 0.80/11.64 & 0.64/7.02 & -0.09 \\
    & DPSGD & 0.54/12.44 & 0.55/3.22 & 0.62/7.16 & 0.54/0.56 & 0.54/1.76 & 0.52/0.00 & 0.85/5.23 & 0.59/4.34 & -0.15 \\
    & Mixup & 0.51/13.34 & 0.62/11.04 & 0.63/12.10 & 0.59/17.70 & 0.51/0.04 & 0.56/0.12 & 0.83/1.39 & 0.63/7.96 & -0.00 \\
    & RelaxLoss & 0.66/17.72 & 0.65/1.18 & 0.70/4.08 & 0.82/12.32 & 0.62/1.68 & 0.70/0.64 & 0.87/7.56 & 0.72/6.45 & -0.01 \\
    & HAMP & 0.45/14.96 & 0.52/15.34 & 0.73/0.24 & 0.50/20.62 & 0.48/0.00 & 0.51/0.00 & 0.99/40.29 & 0.60/13.06 & -0.01 \\
    & \textbf{CoFedMID} & 0.53/8.79 & 0.61/0.87 & 0.53/0.71 & 0.61/0.42 & 0.52/0.23 & 0.60/0.05 & 0.60/0.00 & \textbf{0.57}/\textbf{1.58} & -0.01 \\

    \midrule
    \multirow{7}{*}{Half} 
    & GradSparse & 0.70/19.50 & 0.85/18.12 & 0.82/16.74 & 0.87/29.74 & 0.64/1.20 & 0.73/0.08 & 0.98/39.42 & 0.80/17.83 & -0.09 \\
    & GradNoise & 0.59/12.02 & 0.68/3.50 & 0.56/5.07 & 0.66/11.57 & 0.55/0.50 & 0.61/0.20 & 0.76/7.44 & 0.63/5.76 & -0.17 \\
    & DPSGD & 0.54/13.02 & 0.56/3.86 & 0.51/4.12 & 0.53/0.16 & 0.55/1.24 & 0.55/0.00 & 0.91/9.43 & 0.59/4.55 & -0.17 \\
    & Mixup & 0.51/13.08 & 0.62/6.10 & 0.45/5.56 & 0.61/14.34 & 0.46/3.32 & 0.55/0.04 & 0.89/0.64 & 0.58/6.15 & -0.03 \\
    & RelaxLoss & 0.67/18.08 & 0.65/1.98 & 0.71/4.10 & 0.82/12.22 & 0.63/0.64 & 0.72/0.00 & 0.93/11.86 & 0.73/6.98 & -0.01 \\
    & HAMP & 0.48/16.30 & 0.51/11.40 & 0.67/0.00 & 0.48/13.58 & 0.45/1.72 & 0.51/0.00 & 0.99/43.68 & 0.58/12.38 & -0.07 \\
    & \textbf{CoFedMID} & 0.56/9.9 & 0.59/0.56 & 0.53/1.0 & 0.59/5.30 & 0.5/0.48 & 0.58/0.2 & 0.68/0.99 & \textbf{0.57}/\textbf{2.63} & -0.02 \\
    \bottomrule
    \end{tabular}
\end{table*}

\end{document}